\newtheorem{theorem}{Theorem}[section]
\newtheorem{lemma}[theorem]{Lemma}
\newtheorem{proposition}[theorem]{Proposition}
\newtheorem{corollary}[theorem]{Corollary}
\newtheorem{example}{Example}[section]
\newcommand\nc\newcommand
\nc\ffa{{\boldsymbol a}}\nc\ffA{{\boldsymbol A}}\nc\cA{{\EuScript A}}
\nc\ffb{{\boldsymbol b}}\nc\ffB{{\boldsymbol B}}\nc\cB{{\EuScript B}}
\nc\ffc{{\boldsymbol c}}\nc\ffC{{\boldsymbol C}}\nc\cC{{\mathscr C}}
\nc\ffd{{\boldsymbol d}}\nc\ffD{{\boldsymbol D}}\nc\cD{{\EuScript D}}
\nc\ffe{{\boldsymbol e}}\nc\ffE{{\boldsymbol E}}\nc\cE{{\EuScript E}}
\nc\fff{{\boldsymbol f}}\nc\ffF{{\boldsymbol F}}\nc\cF{{\mathscr F}}
\nc\ffg{{\boldsymbol g}}\nc\ffG{{\boldsymbol G}}\nc\cG{{\EuScript G}}
\nc\ffh{{\boldsymbol h}}\nc\ffH{{\boldsymbol H}}\nc\cH{{\EuScript H}}
\nc\ffi{{\boldsymbol i}}\nc\ffI{{\boldsymbol I}}\nc\cI{{\mathcal I}}
\nc\ffj{{\boldsymbol j}}\nc\ffJ{{\boldsymbol J}}\nc\cJ{{\EuScript J}}
\nc\ffk{{\boldsymbol k}}\nc\ffK{{\boldsymbol K}}\nc\cK{{\EuScript K}}
\nc\ffl{{\boldsymbol l}}\nc\ffL{{\boldsymbol L}}\nc\cL{{\EuScript L}}
\nc\ffm{{\boldsymbol m}}\nc\ffM{{\boldsymbol M}}\nc\cM{{\EuScript M}}
\nc\ffn{{\boldsymbol n}}\nc\ffN{{\boldsymbol N}}\nc\cN{{\EuScript N}}
\nc\ffo{{\boldsymbol o}}\nc\ffO{{\boldsymbol O}}\nc\cO{{\EuScript O}}
\nc\ffp{{\boldsymbol p}}\nc\ffP{{\boldsymbol P}}\nc\cP{{\EuScript P}}
\nc\ffq{{\boldsymbol q}}\nc\ffQ{{\boldsymbol Q}}\nc\cQ{{\EuScript Q}}
\nc\ffr{{\boldsymbol r}}\nc\ffR{{\boldsymbol R}}\nc\cR{{\EuScript R}}
\nc\ffs{{\boldsymbol s}}\nc\ffS{{\boldsymbol S}}\nc\cS{{\EuScript S}}
\nc\fft{{\boldsymbol t}}\nc\ffT{{\boldsymbol T}}\nc\cT{{\EuScript T}}
\nc\ffu{{\boldsymbol u}}\nc\ffU{{\boldsymbol U}}\nc\cU{{\EuScript U}}
\nc\ffv{{\boldsymbol v}}\nc\ffV{{\boldsymbol V}}\nc\cV{{\mathscr V}}
\nc\ffw{{\boldsymbol w}}\nc\ffW{{\boldsymbol W}}\nc\cW{{\mathscr W}}
\nc\ffx{{\boldsymbol x}}\nc\ffX{{\boldsymbol X}}\nc\cX{{\EuScript X}}
\nc\ffy{{\boldsymbol y}}\nc\ffY{{\boldsymbol Y}}\nc\cY{{\mathscr Y}}
\nc\ffz{{\boldsymbol z}}\nc\ffZ{{\boldsymbol Z}}\nc\cZ{{\EuScript Z}}
\nc{\bb}{{\mathbbm{1}}}
\nc\reals{{\mathbb R}}
\nc{\ff}{{\mathbb F}}
\nc{\PP}{{\mathbb P}}
\DeclareMathOperator{\sgn}{sgn}
\DeclareSymbolFont{bbold}{U}{bbold}{m}{n}
\DeclareSymbolFontAlphabet{\mathbbold}{bbold}
\newcommandx{\yellownote}[2][1=]{\todo[linecolor=yellow,backgroundcolor=yellow!25,bordercolor=yellow,#1]{#2}}
\newcommandx{\greennote}[2][1=]{\todo[inline,linecolor=olive,backgroundcolor=green!25,bordercolor=olive,#1]{#2}}
\newcommand\redout{\bgroup\markoverwith{\textcolor{red}{\rule[0.5ex]{2pt}{0.8pt}}}\ULon}
\begin{document}
	
\title{Node repair on connected graphs, Part II}
	\author{\IEEEauthorblockN{Adway Patra} \hspace*{1in}
		\and \IEEEauthorblockN{Alexander Barg}}
	\date{}
	\maketitle

	\begin{abstract}
We continue our study of regenerating codes in distributed storage systems where connections between the nodes are constrained by
a graph. In this problem, the failed node downloads the information stored at a subset of vertices of the graph for the purpose
of recovering the lost data. This information is moved across the network, and the cost of node repair is determined by the graphical distance from the helper nodes to the failed node. This problem was formulated in our recent work ({\em IEEE IT Transactions}, May 2022) where we showed that processing of the information at the intermediate nodes can yield savings in 
repair bandwidth over the direct forwarding of the data. 

While the previous paper was limited to the MSR case, here we extend our study to the case of general regenerating codes. 
We derive a {\em lower bound} on the repair bandwidth and formulate {\em repair
procedures with intermediate processing} for several families of regenerating codes, with an emphasis on the recent
constructions from multilinear algebra. We also consider the task of data retrieval for codes on graphs, deriving a lower bound on the communication bandwidth and showing that it can be attained at the MBR point of the storage-bandwidth tradeoff curve.
\end{abstract}
	
	\renewcommand{\thefootnote}{\arabic{footnote}}
	\setcounter{footnote}{0}		
		{\renewcommand{\thefootnote}{}\footnotetext{
			\vspace*{-.15in}
	
			\noindent\rule{1.5in}{.4pt}
			
			{An extended abstract of this paper is published in Proceedings of the IEEE International Symposium on Information Theory, Helsinki, Finland, July 2022, pp. 1608--1612.
						
			The authors are with Dept. of ECE and ISR, University of Maryland, College Park, MD 20742. Emails: \{apatra,abarg\}@umd.edu. 
					This research was supported by NSF grants CCF2110113 and CCF2104489.
	}}}
	\renewcommand{\thefootnote}{\arabic{footnote}}
	\setcounter{footnote}{0}
	
	\section{Introduction}\label{Introduction}

A distributed storage system is formed of a number of nodes connected by communication links which carry the information to accomplish the two basic tasks performed in the system, namely data recovery and node repair. The amount of information sent over the links is
a key metric of the system efficiency. The problem of node repair has been widely studied in the literature in the last decade following
its introduction in \cite{Dimakis10}.
The system is modeled as $n$ storage nodes each with capacity of $l$ units, used to store a file $\cF$ of size $M$, such that the following two properties are met:
	\begin{itemize}
		\item{(\em{Data retrieval})} The entire file can be recovered by accessing any $k < n$ nodes. 
		\item{(\em{Repair})} If a single node fails, data from $d$ surviving, or helper, nodes is used to restore the lost data. 
		We assume that each of the helper nodes contributes $\beta\le l$ units of data, and that $k \le d \le n - 1$. The parameter $\beta$ is 
		called the {\em per-node repair bandwidth}.
	\end{itemize}
We write the parameters of a regenerating code as $(n,k,d,\beta,l,M).$ 
The fundamental tradeoff between the file size and the repair bandwidth is expressed by the bound of \cite{Dimakis10} 
which has the form
	\begin{equation}\label{eq:sbt}
	M \le \sum_{i=1}^k\min \{l,(d-i+1)\beta\}.
	\end{equation}
This bound can be attained for the two corner points of the curve \eqref{eq:sbt}, giving rise, respectively, to Minimum Storage Regenerating (MSR) codes and Minimum Bandwidth Regenerating (MBR) codes. The corresponding values of $l$ and $\beta$ are found when the minimum in \eqref{eq:sbt} for all $i$ is attained by the first and the second term, respectively, and have the form
 \begin{equation}\label{eq:MSR-MBR}
  \begin{array}{lcc}
  \text{MSR:}& l=\frac Mk, &\beta=\frac l{d-k+1};\\[.1in]
  \text{MBR:} &\beta=\frac{M}{dk-k(k-1)/2}, &l=d\beta.
   \end{array}
 \end{equation}
 
The repair problem has been studied in two versions, called functional and exact repair. Under exact repair, the contents of the failed
node is recovered in the exact form, while for functional repair the node can be restored to a different value as long as it continues
to support the two properties above. While for functional repair the entire bound \eqref{eq:sbt} is achievable, for the more stringent exact repair requirement there is a gap between the achievable file size and the bound, first demonstrated in \cite{Tian14} in an example and then extended in \cite{Sasidharan2014,Mohajer2015} to all sets of parameters $(n,k,d).$ 
 
The MSR case is the most widely studied in the literature. Several general constructions of MSR codes have been proposed in recent years, among them product matrix codes \cite{Rashmi11} and their generalization in \cite{Duursma2020}, diagonal matrix codes \cite{Ye16a}, and others.
In this work we study MSR codes as well as codes for the interior points of the trade-off curve. Several interior-point code families are known in the literature, among them layered and determinant codes \cite{Senthoor2015,Elyasi2019,Elyasi2020}, and a recent construction of \cite{Duursma2021}, called Moulin codes by its authors. Here we cite only papers that are directly related to our work. Generally, the subject of regenerating codes has accumulated vast literature, and we refer the reader  to the survey by Ramkumar et al. \cite{Ramkumar2022} for a very readable and detailed overview.
	
In this paper we continue the study of regenerating codes on graphs introduced in our earlier work \cite{Patra2021}. This variant of the node repair problem  assumes that communication between the nodes is constrained by a (connected) graph $G(V,E)$ and the cost of sending a unit of information from $v_i$ to $v_j$ is determined by the graphical distance $\rho(v_i,v_j)$ in $G$. Similarly, the data retrieval problem
is bound by the same constraints. Placing the nodes of the system on a graph results in a bias in the information cost of node repair in favor of the helper nodes closer to the failed node $v_f,$ and suggests that the closer nodes combine the information received from the outer extremes of the helper set before relaying it to the failed node. We call this approach {\em Intermediate Processing}, or IP, as opposed to direct relaying. 

Prior to our works, repair on graphs using MSR codes was considered in \cite{GeramiXiao2014,LuXuanFu2014} for particular examples of graphs. A somewhat similar setting arises when it is assumed that transmitting the data from a subset of nodes incurs larger cost than for the remaining nodes \cite{AkhKiaGha2010cost,SohnChoYooMoo2018} or that the links between the nodes (in a fully connected graph) are assigned weights that translate into the cost of sending symbols over them \cite{LiMowDengWu2022}. Our assumptions and results are more general in the sense
that these papers relied on direct relaying only and do not afford the option of incorporating intermediate data processing. Another 
difference arises because the heterogeneity in the network in these works is fixed irrespective of the location of the failed nodes. At the same time, our setting implies that cost of transmission from the node may be high or low depending on whether it is far from the failed node or is among its immediate neighbors. Arguably this accounts for a more uniform treatment of the nodes in the network. 

Another related communication problem is that of network coding \cite{Yeung2006} wherein (in its simplest version) the data is transmitted from
a single fixed source to multiple destinations, and where it is assumed that the intermediate nodes combine the chunks of data on their incoming edges. While intermediate processing is a shared feature between node repair on graphs and network codes, they address different tasks and rely on different kinds of code constructions.

In \cite{Patra2021} we focused on the repair problem for MSR codes, proving a lower bound on the communication complexity (bandwidth) of node repair on graphs. We also showed that linear MSR codes can be modified to implement IP, attaining the complexity lower bound and achieving savings in the repair bandwidth over simple relaying. We refer to the introduction of \cite{Patra2021} for a more detailed
discussion, including the motivation for this problem. Initially the goal of this paper had been to extend the results of \cite{Patra2021} to intermediate
points of the storage-bandwidth curve; however it has become clear that the savings from the IP procedure are related more to the
linearity of the considered codes than to the MSR property. Already in \cite{Patra2021} we have pointed out that IP repair is 
possible for any linear MSR code, although the details of the procedure depend on the family and are not immediate to work out.
Therefore, while in this work we study IP for intermediate-point codes, we again start with the MSR case, notably the
product-matrix codes. In doing so, we shift the perspective, viewing them as {\em evaluation codes}, i.e., codes whose encoding can be phrased as evaluation of a linear functional written in a convenient algebraic form. We rewrite the IP repair procedure
of product-matrix codes from \cite{Patra2021}, which enables us to extend it to a much more general class of codes introduced recently by Duursma and Wang \cite{Duursma2020}. This in turn prepares the way for the analysis of intermediate-point codes, and we begin with implementing IP repair for the Moulin codes of Duursma et al. \cite{Duursma2021} which also fall under the evaluation category. 

To set up a benchmark for IP repair, in Sec.~\ref{sec:BRB} we prove a general lower bound on the repair bandwidth which extends a result of \cite{Patra2021}. In Sec.~\ref{sec:pm} we rephrase the IP repair of product-matrix codes in the format of evaluation codes, and in Sec.\ref{sec:GPM} we formulate the IP repair for the codes of \cite{Duursma2020}. Then in Sec.~\ref{sec:moulin} we turn to interior point codes
of  \cite{Duursma2021}, formulating IP node repair and estimating the repair bandwidth. Further in Sec.~\ref{sec:det}, \ref{sec:cascade} we consider the families of determinant and cascade codes \cite{Elyasi2016,Elyasi2020} observing that their construction makes them a natural candidate for IP repair on graphs. In Sec.~\ref{sec:retrieval} we analyze the problem of data retrieval for codes on graphs, deriving a lower bound for the communication and a matching code construction, which comes from the MBR version of the product matrix codes. We end the paper with two brief sections on node repair with noisy edges and partial node repair. 

To summarize, our main results are related to implementing the IP techniques for several families of interior-point codes as well as MSR codes.  
Note however that, unlike the MSR case, we are not able to bridge the gap between the lower bound on the minimum possible required information transmission and what is achievable using the constructions designed in this work (we do not know whether this is a deficiency of the
bounds or of the constructions). We also formulate and analyze a model of data retrieval for regenerating codes on graphs.
	
\section{Bounds on the Repair Bandwidth}\label{sec:BRB}
For a finite field ${F}=\ff_q$ we consider a code $\cC\subset {F}^{nl}$ whose codewords $(C_i,i=1,\dots,n)$ 
are represented by $l\times n$ matrices over ${F}$.
We assume that each coordinate (a vector in ${F}^l$) is written on a single storage node, and that a failed node amounts to having its 
coordinate erased. Limited connectivity of the network is modeled as placing each node on a vertex of a graph
$G(V,E)$ with $|V|=n,$ where each node has direct access only to its immediate neighbors in $G$. Suppose further that the 
coordinate $C_f$ for some $f\in[n]$ is erased, i.e., that the node $f\in [n]$ has failed. Below we denote the vertex in $V$ that corresponds to $f$ by $v_f$ and use $f$ and $v_f$ interchangeably. Let $D\subset V\backslash\{v_f\}, |D|=d$ be the set of nodes in the graph $G$ that are the closest to $v_f$ in terms of graph distance. This set can be found by running breadth-first search with $v_f$
as the root node. Let $G_{f,D} = (V_{f,D},E_{f,D})$ be the subgraph spanned by $D \cup \{v_f\}$ (an example of this subgraph is shown
in Fig.~\ref{fig:graph1}). To repair the failed node, the helper nodes provide information which is communicated to $v_f$  over the edges in $E_{f,D}$. Each helper node in the graph, starting from the nodes farthest from the failed node, sends its repair data ($\beta$ symbols each) to the next node along the shortest path towards $v_f$. An intermediate node can simply collect this data, supplement it with its own information, and forward it along the path to $v_f$ (Accumulate-and-Forward, or AF). The AF technique can be wasteful in high-depth repair graphs since the same data gets transmitted multiple times. This gives rise to the problem of attaining savings by processing the information in the intermediate
nodes relying on the IP approach, an idea that has already been explored for MSR codes in \cite{Patra2021}.  
\vspace*{.05in}

\begin{figure}[th]\begin{center}\scalebox{0.27}{\begin{tikzpicture}
				[
				vertex_style/.style={circle, draw, fill,minimum size=0.08cm,scale=1.25},
				vertex_style1/.style={circle, draw, fill=blue,minimum size=0.08cm,scale=1.25}
				]
				
				\useasboundingbox (-10,-10) rectangle (10,10);
				
				\begin{scope}[rotate=90]
					
					\node[circle,draw=red,fill=red,minimum size=0.1cm,scale=1.5,label={[xshift=1.4cm, yshift=-1cm,minimum size=0.5cm,color=black,scale=4]$v_f$}] (0) at (canvas polar cs: radius=0cm,angle=0){};
					
					\foreach \x/\y in {36/1,108/2,180/3,252/4,324/5}{
						\node[vertex_style] (\y) at (canvas polar cs: radius=4cm,angle=\x){};
					}
					\foreach \x/\y in {0/1,0/2,0/3,0/4,0/5}{
						\path[-] (\x) edge [black,ultra thick] (\y);
					}
				
					\foreach \x/\y in {13.5/6,28.5/7,43.5/8,58.5/9}{
						\node[vertex_style1] (\y) at (canvas polar cs: radius=8.5cm,angle=\x){};
						\path[-] (1) edge [blue,thick] (\y); 
					}
					\foreach \x/\y in {85.5/10,100.5/11,115.5/12,130.5/13}{
						\node[vertex_style1] (\y) at (canvas polar cs: radius=8.5cm,angle=\x){};
						\path[-] (2) edge [blue,thick] (\y); 
					}
					\foreach \x/\y in {85.5+72/14,100.5+72/15,115.5+72/16,130.5+72/17}{
						\node[vertex_style1] (\y) at (canvas polar cs: radius=8.5cm,angle=\x){};
						\path[-] (3) edge [blue,thick] (\y); 
					}
					\foreach \x/\y in {85.5+144/19,100.5+144/20,115.5+144/21,130.5+144/22}{
						\node[vertex_style1] (\y) at (canvas polar cs: radius=8.5cm,angle=\x){};
						\path[-] (4) edge [blue,thick] (\y); 
					}
					\foreach \x/\y in {85.5+216/23,100.5+216/24,115.5+216/25,130.5+216/26}{
						\node[vertex_style1] (\y) at (canvas polar cs: radius=8.5cm,angle=\x){};
						\path[-] (5) edge [blue,thick] (\y); 
					}
					\draw[black,ultra thick,dashed,label=x] (0,0) circle (4.0cm) ;
					\draw[blue,ultra thick,dashed] (0,0) circle (8.5cm);
					\end{scope}
		\end{tikzpicture}}
		\caption{Node repair on a graph: The failed node $v_f$ and the set $D$ of helper nodes, forming the repair tree $T_f$.}\label{fig:graph1}
	\end{center}
\end{figure}
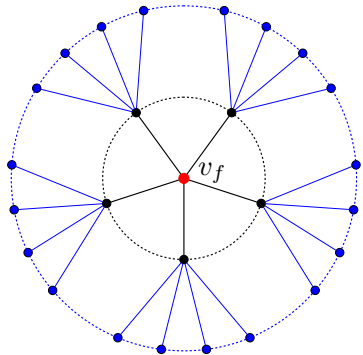

	\vspace*{-.1in}\subsection{Lower bounds on the repair bandwidth} 
In this section we derive a lower bound on the minimum required transmission for a set of helper nodes for repair of the failed node. Suppose that the information stored at the vertices is described by random variables $W_i,i\in[n]$ that have some joint distribution on $({F}^l)^n$ and satisfy $H(W_i)=l$ for all $i$, where $H(\cdot)$ is the entropy.  For a subset $A\subset V$ we write $W_A=\{W_i,i\in A\}.$
Denote by $S_i^f$ the information provided to $v_f$ by the $i$th helper node in the traditional fully connected repair 
scheme, and let $S_D^f=\{S_i^f,i\in D\}.$ By definition we have
   $$
   H(S_i^f) = \beta,\quad H(S_i^f|W_i)=0,\quad H(W_f|S_D^f)=0.
   $$
We also assume that $H(\cF|W_B)=0$ for any $B\subset[n],|B|=k,$ which supports the data retrieval property.
The following result was proved in \cite{Shah2012}:
	\begin{lemma}\label{lemma:claim1} For any $A\subset [n], |A|\le d$ and $i\not\in A$
	  $$
	  H(W_i|W_A)\le \min(l,(d-|A|)\beta).
	  $$
	\end{lemma}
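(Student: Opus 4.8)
The plan is to exploit the repair property together with the data retrieval property, exactly as in the cut-set argument of Dimakis et al. \cite{Dimakis10}. Fix $A\subset[n]$ with $|A|\le d$ and some $i\notin A$. The idea is to bound $H(W_i)$ by the information that a fresh repair of node $i$ would need to download, restricted to a helper set that contains $A$, and then to subtract off what $A$ already "knows."

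First I would choose a helper set $D'$ for node $i$ of size exactly $d$ with $A\subseteq D'$ (possible since $|A|\le d\le n-1$ and $i\notin A$); enumerate its elements so that the $|A|$ nodes of $A$ come first. Since $W_i$ is determined by $S_{D'}^i=\{S_j^i : j\in D'\}$, we have $H(W_i\mid W_A)\le H(S_{D'}^i\mid W_A)$. Now the data $S_j^i$ for $j\in A$ is a function of $W_j$, hence of $W_A$, so these terms drop out after conditioning, and a chain-rule expansion together with $H(S_j^i)=\beta$ gives
\begin{equation*}
H(W_i\mid W_A)\le \sum_{j\in D'\setminus A} H\bigl(S_j^i\mid W_A, S_{A}^i\bigr)\le \sum_{j\in D'\setminus A}\beta=(d-|A|)\beta,
\end{equation*}
which is the second term in the minimum. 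For the bound $H(W_i\mid W_A)\le l$, one simply notes $H(W_i\mid W_A)\le H(W_i)= l$ by assumption, so combining the two yields the claimed minimum.

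The only place needing care — and the main obstacle — is justifying that one may always pick a valid helper set $D'$ of the required size containing $A$, and that the repair guarantees ($H(W_i\mid S_{D'}^i)=0$ with $H(S_j^i\mid W_j)=0$ and $H(S_j^i)=\beta$) hold simultaneously for this particular choice; this relies on the code supporting repair of any node from any $d$-subset of the survivors, which is part of the definition of a regenerating code adopted here. Once that is in place, the computation is the standard cut-set manipulation and nothing further is required. (This is precisely the content of \cite[Lemma]{Shah2012}, cited above.)
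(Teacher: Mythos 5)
Your proof is correct: the reduction of $H(W_i\mid W_A)$ to $H(S_{D'}^i\mid W_A)$ via the repair guarantee, the vanishing of the terms $S_j^i$ for $j\in A$ after conditioning on $W_A$, and the trivial bound $H(W_i\mid W_A)\le H(W_i)=l$ together give exactly the claimed minimum. The paper itself states this lemma without proof, citing \cite{Shah2012}, and your argument is precisely the standard cut-set manipulation from that source, so nothing further is needed.
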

The next lemma forms a simple extension of \cite[Lemma II.1]{Patra2021}, generalizing it to all exact regenerating codes.
	\begin{lemma}\label{lemma:bound} Let $v_f, f\in[n]$ be the failed node.
		For a subset of the helper nodes $E \subset D$ let $R_E^f$ be a function of $S_E^f$ such that 
		\begin{equation}\label{eq:eqtn1}
			H(W_f|R_E^f, S_{D\backslash E}^f) = 0.
		\end{equation} 
		If $|E| \ge d-k+1$, then $H(R_E^f) \ge (d-k+1)\beta.$ In particular, at the MSR point we have $H(R_E^f)\ge l.$
	\end{lemma}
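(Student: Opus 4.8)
The plan is to lower-bound $H(R_E^f)$ by combining the decoding constraint \eqref{eq:eqtn1} with the data-retrieval assumption, exploiting the fact that $R_E^f$ together with the $d-|E|$ variables $S_{D\setminus E}^f$ suffices to recover $W_f$. First I would observe that, since $R_E^f$ is a function of $S_E^f$, which is in turn a function of $W_E$, we have $H(R_E^f) \le H(W_E)$ and $H(R_E^f \mid W_E) = 0$; more importantly, I can append $R_E^f$ to any collection of $W_i$'s without changing entropies involving $W_E$. The key move is to pick a set $B \subset [n]$ of $k$ indices that are "hard" to recover — specifically, I want $B$ to contain $v_f$ together with $k-1$ of the helper nodes in $D \setminus E$, which is possible precisely because $|D \setminus E| = d - |E| \le d - (d-k+1) = k-1$ forces us to also use... wait, we need $|D\setminus E| \ge k-1$; since $|E|\ge d-k+1$ we get $|D\setminus E| = d-|E| \le k-1$, so in fact $D\setminus E$ has at most $k-1$ elements. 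So I would instead take $B$ to be $\{v_f\}$ together with all of $D\setminus E$ together with enough further helper nodes from $E$ to reach size $k$.

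Concretely, write $D \setminus E = A_1$ with $|A_1| = d - |E| \le k-1$, and choose $A_2 \subset E$ with $|A_2| = k - 1 - |A_1|$, so that $B := \{v_f\} \cup A_1 \cup A_2$ has $|B| = k$. The chain I would run is: starting from $H(R_E^f, W_{A_1}, W_{A_2})$, use \eqref{eq:eqtn1} — which says $H(W_f \mid R_E^f, S_{A_1}^f) = 0$ and hence $H(W_f \mid R_E^f, W_{A_1}) = 0$ since $S_{A_1}^f$ is a function of $W_{A_1}$ — to conclude that $W_f$ is determined by $(R_E^f, W_{A_1}, W_{A_2})$, hence so is the whole file $\cF$ by the retrieval assumption $H(\cF \mid W_B) = 0$. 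Then bound
\[
H(R_E^f) \ge H(R_E^f \mid W_{A_1}, W_{A_2}) \ge H(R_E^f, W_f \mid W_{A_1}, W_{A_2}) - H(W_f \mid W_{A_1}, W_{A_2}) \ge H(W_f \mid W_{A_1}, W_{A_2}) - H(W_f \mid W_{A_1\cup A_2}),
\]
which is not yet what I want; the cleaner route is to track the file entropy $H(\cF) = M$ directly. Using $H(\cF \mid W_B) = 0$ and the fact that $(R_E^f, W_{A_1\cup A_2})$ determines $W_B$ (as $W_f$ is recovered), I get $H(\cF \mid R_E^f, W_{A_1 \cup A_2}) = 0$, so
\[
H(R_E^f) + H(W_{A_1 \cup A_2}) \ge H(R_E^f, W_{A_1\cup A_2}) \ge H(\cF, R_E^f, W_{A_1\cup A_2}) - 0 \ge H(\cF) = M,
\]
and since $|A_1 \cup A_2| = k-1$, I can bound $H(W_{A_1\cup A_2}) \le \sum_{i=2}^{k}\min\{l,(d-i+1)\beta\}$ by a repeated application of Lemma~\ref{lemma:claim1} along any ordering of $A_1\cup A_2$. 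Combining with the tradeoff bound \eqref{eq:sbt} at its extreme (or just directly), $M - H(W_{A_1\cup A_2}) \ge \min\{l,d\beta\} = d\beta$ when we are at or below the MSR regime; more carefully, choosing the file to meet \eqref{eq:sbt} with equality gives $H(R_E^f) \ge \sum_{i=1}^k\min\{l,(d-i+1)\beta\} - \sum_{i=2}^k\min\{l,(d-i+1)\beta\} = \min\{l,d\beta\}$, but I actually need the bound to hold for every code, not just optimal ones, so the argument must run purely through the entropy inequalities above with the generic $M$.

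The subtlety — and the main obstacle — is that $M$ is arbitrary, so the inequality $H(R_E^f) \ge M - H(W_{A_1\cup A_2})$ is only useful if I bound $H(W_{A_1\cup A_2})$ \emph{from above} tightly and simultaneously know $M$ is large; the honest way is to run the Shah et al.\ style cut-set argument in reverse. I would instead argue: the collection $(R_E^f, S^f_{D\setminus E})$ determines $W_f$, and $(W_f, W_{D\setminus E})$ has size $d - |E| + 1 \le k$, so I should build the file-recovery set around these. The cleanest self-contained derivation mimics the proof of \eqref{eq:sbt}: take any ordering $i_1,\dots,i_{|E|}$ of $E$, set $D\setminus E \cup \{i_1,\dots,i_{k-1-|D\setminus E|}\}$ as the "free" part accounting for the first $k-1$ terms of the min-sum, and charge the remaining $\min\{l,(d-k+1)\beta\}\cdot(\text{count})$... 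Rather than belabor this, I will structure the final proof as: (i) reduce \eqref{eq:eqtn1} from $S$-variables to $W$-variables; (ii) apply the standard genie-aided / cut-set chain that proves \eqref{eq:sbt}, but feed $R_E^f$ in place of the block of helper transmissions $S^f_E$, noting $H(R_E^f)$ replaces what would have been $|E|\beta$ but must still cover the $d-k+1$ "effective" helper slots because $|E| \ge d-k+1$; (iii) read off $H(R_E^f) \ge (d-k+1)\beta$, and specialize via \eqref{eq:MSR-MBR} to get $H(R_E^f) \ge l$ at the MSR point. The one genuinely delicate point is verifying that the $d-k+1$ lower bound, and not something smaller like $(|E|-(k-1))\beta$ when $|E|$ is large, is the right quantity — this comes from the MSR-style counting where each of the last $d-k+1$ helpers contributes a full $\min\{l,(d-k+1)\beta\}$ and $R_E^f$ must stand in for all of them.
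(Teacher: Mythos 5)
Your first chain of inequalities is exactly the paper's argument: reduce \eqref{eq:eqtn1} to $W$-variables to get $H(W_f\mid R_E^f,W_{D\setminus E})=0$, enlarge $D\setminus E$ by a set $A_2\subset E$ to a $(k-1)$-set, invoke reconstruction on the $k$-set $\{v_f\}\cup(D\setminus E)\cup A_2$ to get $H(R_E^f,W_{D\setminus E},W_{A_2})\ge M$, and conclude $H(R_E^f)\ge M-H(W_{D\setminus E\cup A_2})$ by subadditivity, bounding the last entropy via Lemma~\ref{lemma:claim1}. Had you stopped there and finished the computation correctly, you would have reproduced the paper's proof.

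Two concrete problems remain. First, the application of Lemma~\ref{lemma:claim1} is mis-indexed: chaining over a $(k-1)$-set gives $H(W_{D\setminus E\cup A_2})\le\sum_{i=1}^{k-1}\min\{l,(d-i+1)\beta\}$ (the first term is $\min\{l,d\beta\}$, conditioning on the empty set), not $\sum_{i=2}^{k}\min\{l,(d-i+1)\beta\}$. With the correct sum, subtracting from the cut-set value of $M$ leaves precisely the missing $k$-th term $\min\{l,(d-k+1)\beta\}=(d-k+1)\beta$, which is the claimed bound; your version instead produces $\min\{l,d\beta\}$, a strictly stronger and unjustified conclusion that then (rightly) makes you distrust the whole computation. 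Second, you never finish the proof: after correctly flagging that the argument needs $M$ to sit on the tradeoff curve (the paper handles this by taking $M$ equal to its maximal value in \eqref{eq:sbt}, i.e., the lemma is stated for codes achieving the cut-set bound), you abandon the working derivation in favor of a ``mimic the cut-set proof in reverse'' plan whose key step --- why the answer is $(d-k+1)\beta$ and not something else --- is exactly the point left unverified. The fix is simply to return to your first chain with the corrected indexing; no new machinery is needed.
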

\begin{proof}
\noindent{\em Proof:}
		By the assumption \eqref{eq:eqtn1}, given the contents of all the nodes in $D\backslash E,$ the information contained in $R_E^f$ is sufficient to repair $v_f$, i.e., 
		\begin{equation}\label{eq:eqtn2}
			H(W_f|R_E^f, W_{D\backslash E})=0.
		\end{equation}
		We have $|D\backslash E| \le k-1$. Consider a set $A \subset E$ with $|A| = k-1-|D\backslash E|$. Now, by \eqref{eq:eqtn2}
		\begin{equation}\label{eq:eqtn3}
			H(R_E^f, W_{D\backslash E}, W_{A}) = H(R_E^f, W_{D\backslash E}, W_f, W_{A}) = M,
		\end{equation}
where the first equality in \eqref{eq:eqtn3} follows from \eqref{eq:eqtn2} and the chain rule, and the second follows from reconstruction property because 
		$|D\backslash E|+|A|+1 = k$. Next observe that
		\begin{align}\nonumber
			H(R_E^f, W_{ D\backslash E}, W_{A}) &\le H(R_E^f)+H( W_{D\backslash E}, W_{A}),
		\end{align}
		and so 
\vspace*{-.1in}		\begin{align*}
			H(R_E^f) &\ge M - H( W_{D\backslash E}, W_{A})\\
			&\ge M- \sum_{i=1}^{k-1}\min \{l,(d-i+1)\beta\},
		\end{align*}
where the last inequality follows from Lemma \ref{lemma:claim1}. The largest value of $M$ is given in \eqref{eq:sbt}, implying the
claim of the lemma. 
\end{proof}

Note that at the MSR point $(d-k+1)\beta=l$ and we recover Lemma II.1 from \cite{Patra2021}. In that work we also showed that $H(R_E^f)=l$ is achievable at the MSR point. At the same time for all other points of the tradeoff curve, $(d-k+1)\beta<l.$ Below in this paper we show
that the value $H(R_E^f)=l$ can be achieved by some code families, and hence it might be possible to improve the bound. The following lemma from \cite{Mohajer2015} shows that in certain situations this is indeed the case.
	\begin{lemma}[\!\!\cite{Mohajer2015}, Lemma 2]
		For any pair of disjoint sets $E, B \subseteq D$ with $i \notin E \cup B$ we have 
		$$H(S_E^f|W_B) \ge \frac{|E|}{d-|B|}H(S_D^f|W_B).$$
	\end{lemma}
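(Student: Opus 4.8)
The plan is to derive this from the chain rule together with the exchangeability of a symmetric code, exploiting that each helper symbol is a deterministic function of its own node. First I would clean up the right‑hand side. Since $B\subseteq D$ and $H(S_i^f|W_i)=0$, every $S_i^f$ with $i\in B$ is a function of $W_B$, so $H(S_D^f|W_B)=H(S_{D\setminus B}^f|W_B)$; this is a conditional entropy of the $m:=d-|B|$ variables $\{S_i^f:i\in D\setminus B\}$. Since $E$ and $B$ are disjoint we also have $E\subseteq D\setminus B$. So the claim reduces to $H(S_E^f|W_B)\ge \frac{|E|}{m}H(S_{D\setminus B}^f|W_B)$, an inequality purely about the family $\{S_i^f:i\in D\setminus B\}$ conditioned on $W_B$.

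Next I would invoke the symmetry assumption underlying these cut‑set type arguments: by averaging the code over the permutations of the node labels different from $f$ — an operation that preserves the parameters $(n,k,d,\beta,l,M)$ and all the entropic constraints — one may assume the joint law of $(W_B,\{S_i^f:i\in D\setminus B\})$ is exchangeable in the $D\setminus B$ coordinates (here we use that $B$ is disjoint from $D\setminus B$, so $W_B$ is untouched by the relevant permutations).

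Now fix any ordering $i_1,\dots,i_m$ of $D\setminus B$ whose first $|E|$ elements are exactly $E$, and set $a_j:=H(S_{i_j}^f\,|\,W_B,S_{i_1}^f,\dots,S_{i_{j-1}}^f)$. The chain rule gives $H(S_E^f|W_B)=\sum_{j=1}^{|E|}a_j$ and $H(S_{D\setminus B}^f|W_B)=\sum_{j=1}^{m}a_j$. Moreover $a_{j+1}\le a_j$: dropping the conditioning variable $S_{i_j}^f$ only increases entropy, and then exchangeability lets us relabel $i_{j+1}\leftrightarrow i_j$ to recognize the result as $a_j$. Hence $a_1\ge a_2\ge\dots\ge a_m$ (this monotonicity is essentially Han's inequality), so the average of the first $|E|$ of these numbers is at least the average of all $m$, i.e.
\[
H(S_E^f|W_B)=\sum_{j=1}^{|E|}a_j\ \ge\ \frac{|E|}{m}\sum_{j=1}^{m}a_j\ =\ \frac{|E|}{d-|B|}\,H(S_{D\setminus B}^f|W_B)\ =\ \frac{|E|}{d-|B|}\,H(S_D^f|W_B),
\]
which is the assertion.

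The one step that requires care is the symmetrization: one must check that replacing the code by its symmetrized version is harmless for the purpose at hand (it is, since only achievable parameters matter) and that the repair symbols $S_i^f$ transform consistently under the relabelling; everything after that is the chain rule plus the elementary fact that a prefix average of a non‑increasing sequence dominates the full average. It is worth noting that exchangeability is genuinely needed — without it one could have $H(S_{i}^f|W_B)=0$ for one helper and $H(S_{j}^f|W_B)=\beta$ for another, contradicting the inequality already for $|E|=1$.
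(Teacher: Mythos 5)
The paper does not actually prove this lemma; it is quoted verbatim from \cite{Mohajer2015}, so there is no in-paper argument to compare against. Your core computation is the standard one and is correct as far as it goes: reducing to $H(S_D^f|W_B)=H(S_{D\setminus B}^f|W_B)$ via $H(S_i^f|W_i)=0$, expanding both sides by the chain rule with the elements of $E$ listed first, observing that the increments $a_j$ are non-increasing (drop a conditioning variable, then relabel by exchangeability), and comparing the prefix average of a non-increasing sequence to its full average. This is exactly the Han-type argument used in the cited source.

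The genuine gap is the symmetrization step, and it is not a formality you can defer. The inequality is a pointwise statement about the conditional entropies of the \emph{given} code and the \emph{given} sets $E,B$, and it can fail for a non-symmetric code: as your own closing remark essentially shows, nothing in the axioms $H(S_i^f)=\beta$, $H(S_i^f|W_i)=0$, $H(W_f|S_D^f)=0$ prevents $S_i^f$ from being a function of $W_B$ for one helper but not for another once $B\neq\emptyset$. Replacing the code by the space-sharing of its relabelings produces a \emph{different} code whose entropies are sums of the original entropies over permutations; the exchangeable inequality you prove for that code is therefore an inequality averaged over all images $(\pi(E),\pi(B),\pi(f))$ and does not descend to the particular triple $(E,B,f)$ of the original code. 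In \cite{Mohajer2015} the lemma is stated only for \emph{symmetric} exact-repair codes, and the restriction to symmetric codes is justified there separately because the quantity ultimately being bounded (the file size) is invariant under symmetrization; that justification does not automatically transfer to the per-set statement as written here, nor to the downstream use in Corollary~\ref{cor:max}, which concerns the communication of a specific repair scheme. So either the symmetry hypothesis must be added to the statement, or you need an argument for the specific conditional entropies rather than their permutation averages; ``only achievable parameters matter'' does not close this.
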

Taking $B=\emptyset$ and noting that $H(S_D^f) \ge H(W_f) \ge l$, we obtain
	\begin{corollary}\label{cor:max} For any $E\subset D,$
		$
		H(R_E^f) \ge \frac{|E|l}{d}.
		$
	\end{corollary}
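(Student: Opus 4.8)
\emph{Proof proposal.} The statement is the $B=\emptyset$ specialization of the quoted lemma of \cite{Mohajer2015}, once the quantity $H(S_D^f)$ has been pinned down. So the plan is: first apply that lemma with $B=\emptyset$ and with $E$ the given subset of $D$. Its hypotheses hold here—$E$ and $B$ are disjoint and $f\notin E\cup B$, since $E\subseteq D$ and $f\notin D$—so
\[
H(S_E^f)\ \ge\ \frac{|E|}{d}\,H(S_D^f).
\]

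Second, I would lower-bound $H(S_D^f)$ using only the repair hypothesis $H(W_f\mid S_D^f)=0$ and the normalization $H(W_f)=l$: indeed $H(S_D^f)\ge I(W_f;S_D^f)=H(W_f)-H(W_f\mid S_D^f)=l$. Substituting into the display gives $H(S_E^f)\ge |E|l/d$. To read this off for the combined message $R_E^f$ that a subtree spanned by $E$ forwards toward $v_f$: the edge carrying $R_E^f$ is a cut separating $E$ from $D\setminus E$ in $T_f$, so $R_E^f$ together with the data reaching $v_f$ from $D\setminus E$ still reconstructs $W_f$, whence $H(R_E^f)\ge I(R_E^f;W_f\mid S_{D\setminus E}^f)=H(W_f\mid S_{D\setminus E}^f)$; so it suffices that $H(W_f\mid S_{D\setminus E}^f)\ge |E|l/d$.

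The point I expect to need care is precisely this last inequality. One has $H(W_f\mid S_{D\setminus E}^f)=I(W_f;S_E^f\mid S_{D\setminus E}^f)\le H(S_E^f)$, so the bound $H(S_E^f)\ge|E|l/d$ does not pass to it for free—there is slack between $H(S_E^f)$ and $H(W_f\mid S_{D\setminus E}^f)$ that vanishes at the MSR and MBR corners but not in general. I would therefore either argue the conditional analogue of the first two steps directly, or—as the authors seem content to do—use $H(S_E^f)\ge |E|l/d$ as the operative lower bound on what the nodes of $E$ must contribute; this residual slack is plausibly the same phenomenon that prevents matching the bound with the interior-point constructions later in the paper.
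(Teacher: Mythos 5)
Your first two steps are exactly the paper's proof: it takes $B=\emptyset$ in the quoted lemma of \cite{Mohajer2015} and notes $H(S_D^f)\ge H(W_f)\ge l$ (which holds because $H(W_f|S_D^f)=0$), concluding $H(S_E^f)\ge |E|l/d$. So on the substance of the derivation you have reproduced the paper's argument.

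The issue you flag at the end is real, and the paper does not address it: the corollary is stated for $H(R_E^f)$, where $R_E^f$ is (per Lemma~\ref{lemma:bound}) a \emph{function} of $S_E^f$, so $H(R_E^f)\le H(S_E^f)$ and the bound on $H(S_E^f)$ does not transfer downward for free. The paper's one-line proof silently identifies the two quantities; the honest reading is that the corollary is a lower bound on the entropy of the raw helper data $S_E^f$ contributed by $E$ (i.e., the benchmark against which compression is measured), and your observation that the gap between $H(S_E^f)$ and $H(W_f\mid S_{D\setminus E}^f)$ is exactly the slack that the interior-point constructions fail to close is consistent with the authors' own caveat at the end of Section~\ref{sec:BRB}. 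In short: your proof of the inequality for $S_E^f$ matches the paper; your reservation about the passage to $R_E^f$ is a legitimate criticism of the paper's statement rather than a defect of your argument.
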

If $|E|>(d-k+1)\beta(d/l)$ then this result is better than the claim of Lemma~\ref{lemma:claim1} at the interior points.

We note that the constructions presented below do not reach the bounds proved in this section, leaving an open question of the optimal repair bandwidth for the IP repair technique.

%
%

	\section{Intermediate Processing for Evaluation Codes}\label{sec:IPC}
In this section we show that ${F}$-linear regenerating codes support repair on graphs with lower communication complexity compared to the AF strategy. In Sec.~\ref{sec:pm} we give an alternative description of node repair using the IP strategy at the MSR point for product-matrix codes and in Sec.~\ref{sec:GPM} we extend this procedure to their generalization due to Duursma and Wang \cite{Duursma2020}, which forms a new result. These two sections prepare the way for an IP node repair procedure for interior-point codes in Sec.~\ref{sec:moulin} below.
	
%
%

\vspace*{-.1in}	\subsection{Product-matrix (PM) codes} \label{sec:pm}
As our first goal, we rewrite the IP repair of PM codes originally introduced in \cite[Sec.II.A]{Patra2021} to fit the evaluation code paradigm. We begin with a brief introduction to the original description of the PM framework.
PM codes, constructed in \cite{Rashmi11}, form a family of MSR codes with parameters $[n,k,d=2(k-1),l=k-1,\beta=1,M=k(k-1)]$. 
The data file $\cF$ consists of $M$ uniformly chosen symbols from a finite field $F$. These symbols are organized to form two symmetric matrices $S_1,S_2$ of order $k-1$, each consisting of $\binom{k}{2}$ independent symbols and hence accounting for a total of
$M$ symbols. The encoding matrix $\Psi$ is taken to be an $n \times d$ matrix such that $\Psi = \begin{bmatrix}
		\Phi |\Lambda\Phi
	\end{bmatrix}$ where $\Phi$ is a $n \times (k-1)$ Vandermonde matrix with rows of the form 
$\phi_i = (1,x_i,x_i^2,\dots,x_i^{l-1}), i=1,\dots,n$ and $\Lambda = \text{Diag}(x_1^l,x_2^l,\dots,x_n^l)$ is a diagonal matrix 
where $x_1,\dots,x_n$ are distinct non-zero elements of $F$.
The encoded message is defined as $C = \Psi (S_1|S_2)^\intercal$ and the $l$ symbols of row $i$ of $C$ are stored in node $i$. Thus the
$i$th node stores the $l$-vector $\phi_iS_1+\lambda_i\phi_iS_2$.

The node repair process goes as follows: assuming that node $f \in [n]$ has failed, and the helper nodes are $D \subseteq [n]\setminus\{f\}, |D|=d$, helper node $i \in D$ sends the symbol of $F$ found as $(\phi_iS_1+\lambda_i\phi_iS_2)\phi_f^\intercal$. Since the submatrix $\Psi_D$ formed of the rows of $\Psi$ indexed by $D$ is invertible, node $f$ can calculate $S_1\phi_f^\intercal$ and $S_2\phi_f^\intercal$ from which it can compute its contents as $\phi_fS_1+\lambda_f\phi_fS_2$.

To phrase this differently, let $s_1(y,z)$ and $s_2(y,z)$ be two symmetric polynomials over ${F}$ of degree at most $k-2$ in each of the
two variables (this means, for instance, that $s_1(y,z)=s_1(z,y)$). Because of the symmetry, the total number of independent coefficients is $M$, so $s_1,s_2$ can be used to represent $\cF.$
Letting $x_1,\dots, x_n$ be distinct points of $F$, we let node $i$ store the $l$ coefficients of the polynomial 
$g^{(i)}(z)=s_1(x_i,z)+x_i^{k-1}s_2(x_i,z)$ for all $i\in[n].$
	
Using this description of the codes, the IP repair process of \cite{Patra2021} can be phrased as follows.
Let $f \in [n]$ be the failed node, let $D$ be the set of $d$ helpers, and let $A$ be a set of helper nodes of size at least $d-k+1 = k-1$. For $h \in D$ define the polynomial
	\begin{equation}\label{eq:Lagrange}
	l^{(h)}(z) = \sum_{j=0}^{d-1} l^h_j {z}^j:=\prod_{\stackrel{i \in D}{i \ne h}}\frac{z-a_i}{a_h-a_i}
	\end{equation}
of degree at most $d-1$. Then the set $A$ transmits the $l$-dimensional vector
	\begin{equation}\label{eq:eqtn4}
		{\xi}(f,A):= \sum_{h \in A}g^{(h)}(a_f)
		\left[\begin{array}{l}
			l^h_0+a_f^{k-1}l^h_{k-1}\\
			l^h_1+a_f^{k-1}l^h_{k}\\
			\hspace*{.3in}\vdots\\
			l^h_{k-2}+a_f^{k-1}l^h_{2k-3}	
		\end{array}\right].
	\end{equation}
We show that (i), the failed node can recover its value based on the vector $\xi(f,{D})$, and (ii), the intermediate nodes
can save on the repair bandwidth by processing the received information. To show (i) we prove
	\begin{lemma}\label{lemma_pm}
		The content of the failed node $f$ coincides with the vector $\xi(f,D)$, i.e.,
		$$
		g^{(f)}(z)=\sum_{i=0}^{l-1}(\xi(f,D))_i\,{z}^i.
		$$
	\end{lemma}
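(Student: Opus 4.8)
The plan is to show directly that the polynomial $\sum_{i=0}^{l-1}(\xi(f,D))_i z^i$ equals $g^{(f)}(z) = s_1(a_f,z) + a_f^{k-1}s_2(a_f,z)$ by expanding both sides. First I would recall that for $h \in D$, the stored polynomial is $g^{(h)}(z) = s_1(a_h,z) + a_h^{k-1} s_2(a_h,z)$, so $g^{(h)}(a_f) = s_1(a_h,a_f) + a_h^{k-1}s_2(a_h,a_f)$. When $E = D$, the Lagrange polynomials $l^{(h)}(z)$ of \eqref{eq:Lagrange} satisfy $l^{(h)}(a_{h'}) = \delta_{h,h'}$ for $h,h' \in D$, so $\sum_{h \in D} s_j(a_h,z)\, l^{(h)}(y)$, viewed as a polynomial in $y$ of degree $\le d-1$, agrees with $s_j(y,z)$ at the $d$ points $\{a_h : h \in D\}$; since $s_j(y,z)$ has degree $\le k-2 \le d-1$ in $y$, the interpolation is exact, giving $\sum_{h\in D} s_j(a_h,z)\, l^{(h)}(y) = s_j(y,z)$ as polynomials in $y$ (with coefficients polynomials in $z$).

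Next I would unpack the coordinate description in \eqref{eq:eqtn4}. Writing $l^{(h)}(y) = \sum_{j=0}^{d-1} l^h_j y^j$, the $m$-th entry of $\xi(f,D)$ (for $m = 0,\dots,k-2$) is $\sum_{h\in D} g^{(h)}(a_f)\,(l^h_m + a_f^{k-1} l^h_{m+k-1})$. Summing $\sum_{m=0}^{k-2}(\xi(f,D))_m z^m$ and regrouping, the terms with coefficients $l^h_0,\dots,l^h_{k-2}$ assemble the "low" part of $l^{(h)}(z)$ truncated to degree $k-2$, and the terms with $l^h_{k-1},\dots,l^h_{2k-3}$ assemble the "high" part $\sum_{j=k-1}^{2k-3} l^h_j z^{j-(k-1)}$. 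The key observation is that the evaluation identity above must be applied with the roles of the variables swapped: the coefficient of $z^m$ in $s_j(x_i,z)$, as a function of $x_i$, is itself a polynomial of degree $\le k-2$, so interpolating it through the points $a_h$ is exact. I expect the cleanest route is to note that $\sum_{h\in D} g^{(h)}(a_f) l^{(h)}(z)$ is exactly the degree-$\le d-1$ interpolant through the points $(a_h, g^{(h)}(a_f))$, hence equals the unique degree-$\le d-1$ polynomial $p(z)$ with $p(a_h) = s_1(a_h,a_f) + a_h^{k-1}s_2(a_h,a_f)$ — and using the bidegree structure one identifies $p(z) = s_1(z,a_f) + z^{k-1} s_2(z,a_f)$ because the right-hand side has degree exactly $(k-2)+(k-1) = 2k-3 = d-1$ in $z$ and matches at all $d$ points $a_h$. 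Then splitting $p(z)$ at the $z^{k-1}$ boundary and using the symmetry $s_j(z,a_f) = s_j(a_f,z)$ recovers $s_1(a_f,z) + a_f^{k-1}s_2(a_f,z) = g^{(f)}(z)$ from precisely the two blocks of coefficients that appear in \eqref{eq:eqtn4}.

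The main obstacle, and the step requiring the most care, is the bookkeeping of which coefficients $l^h_j$ land on which power of $z$: one must verify that the truncation of $l^{(h)}(z)$ to degrees $0,\dots,k-2$ captures the contribution of $s_1$ and the shifted block of degrees $k-1,\dots,2k-3$ captures $s_2$, with no overlap and no missing terms — this works exactly because $\deg_z s_1(z,a_f) \le k-2$ and $\deg_z\big(z^{k-1}s_2(z,a_f)\big)$ ranges over $k-1,\dots,2k-3$, partitioning $\{0,\dots,d-1\}$. I would also double-check the degree count $d = 2(k-1)$ and $l = k-1$ so that the $2l = d$ coefficients of $l^{(h)}$ split evenly into the two blocks of size $k-1$, and confirm that $g^{(f)}$ indeed has degree $\le l-1 = k-2$ in $z$, consistent with the summation range $\sum_{i=0}^{l-1}$ in the statement. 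Everything else is routine linear algebra over $F$.
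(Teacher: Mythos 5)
Your proposal is correct and follows essentially the same route as the paper: both identify $\sum_{h\in D}g^{(h)}(a_f)\,l^{(h)}(z)$ with the degree-$\le d-1$ polynomial $s_1(a_f,z)+z^{k-1}s_2(a_f,z)$ via Lagrange interpolation and the symmetry of $s_1,s_2$, then split its coefficients into the low block (degrees $0,\dots,k-2$) and the shifted high block (degrees $k-1,\dots,2k-3$) to recover $g^{(f)}(z)$ exactly as in \eqref{eq:eqtn4}. The interpolation-in-$y$ digression in your first paragraph is unnecessary given the cleaner argument you give afterward, but nothing in it is wrong.
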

	\begin{proof}
Consider the polynomial $H(z) = s_1(a_f,z)+z^{k-1}s_2(a_f,z)$ and note that $\deg (H)\le 2k-3=d-1$. Thus if we write
$H(z)=\sum_{j=0}^{d-1} g_j z^j,$ then the polynomial $g^{(f)}$ defined above can be written as
\\[.05in]
\centerline{$g^{(f)}(z)=\sum_{j=0}^{k-2}(g_j+a_f^{k-1}g_{k-1+j})z^j.$}\\[.05in] 
Rephrasing, the contents of the node $f$ is
  $$
  (g_0+a_f^{k-1}g_{k-1},g_1+a_f^{k-1}g_k,\dots,g_{k-2}+a_f^{k-1}g_{2k-3})^\intercal.
  $$
At the same time, using \eqref{eq:Lagrange} we can write $H(z)$ in the Lagrange form
		$
		H(z) = \sum_{h \in D}g^{(h)}(a_f)l^{(h)}(z).
		$
The coefficient vector of this polynomial is nothing but $\xi(f,D)$.
	\end{proof}
To show part (ii) we note that the polynomials $\{l_h(z)\}_{h\in D}$ do not depend on $\cF$ and can be computed at any node in the network. So what we care to receive from the helper nodes are the multipliers $\{g^{(h)}(a_f)\}_{h\in D}$. 
Hence, for any set of helper nodes with $|A| < d-k+1$, it is gainful to send $\{g^{(h)}(a_f)\}_{h\in A}$ 
rather than the vector $\xi(f,A),$ since the former requires fewer than $l$ transmissions. At the same time,  when
$|A|\ge d-k+1,$ we can transmit the vector $\xi(f,A)$ of dimension $l$, meeting the bound of Lemma \ref{lemma:bound} and
reproducing the result from \cite{Patra2021}.

Using multilinear algebra notation (more on it in the next section), we can rephrase the code description as follows. 
The encoding is defined as a linear functional 
    $$
\phi\in(F^2\otimes S^2F^{k-1})^\ast,
    $$
where $S^2F^{k-1}$ is the second symmetric power (this is another way of saying that the encoding relies on evaluations of symmetric polynomials).
Node $i$ stores a restriction of $\phi$ to $x_i\otimes y_i\otimes F^{k-1}$, where $x_i=[1,a_i^{k-1}], 
y_i=[1, a_i, \dots , a_i^{{k-2}}].$ The contents of the failed node is a vector in the $l$-dimensional subspace 
$(x_f\otimes {y_f}\otimes F^{k-1})^\ast$, and the IP procedure recovers the coordinates of this vector in stages that correspond to moving along the repair graph toward the failed node.
A general version of this idea underlies the repair procedure in the following sections.

A general statement characterizing the savings attained by this repair procedure depends on the properties of the graph $G$ and on the choice of the helper set $D$ in relation to the failed vertex. It is possible to write it for some special graph families such as regular trees and other simple classes, as was done in \cite[Sec.~III]{Patra2021}; however we find it easier and more informative to simply illustrate the advantage of IP repair by example. 
The same approach is taken in regards to the bandwidth savings achieved by other code families considered in this paper.

	\begin{example}\label{example:pm}
{\rm	Consider the $[n=7,k=5,d=6,l=2,\beta=1,M=10]$ PM MSR code, placed on the graph shown in Fig.~\ref{fig:graph}. This graph
	should be thought of as a subgraph in a large storage network, formed by locating a helper set for the failed node.
\begin{figure}[ht]	\begin{center}\scalebox{0.6}
		{\begin{tikzpicture}[roundnode/.style={circle, draw=black, inner sep=3pt},
				rootnode/.style={circle, draw=red, very thick,  inner sep=3pt,fill=red}			]
				\node[rootnode] (1) at (0,0) {};
				\node[roundnode] (2) at (-1.5,-2) {};
				\node[roundnode] (3) at (1.5,-2) {};
				\node[roundnode] (4) at (-3,-4) {};
				\node[roundnode] (5) at (-1,-4) {};
				\node[roundnode] (6) at (1,-4) {};
				\node[roundnode] (7) at (3,-4) {};
				\path[-] (1) edge [blue,thick] (2);
				\path[-] (1) edge [blue,thick] (3);
				\path[-] (2) edge [blue,thick] (4);
				\path[-] (2) edge [blue,thick] (5);
				\path[-] (3) edge [blue,thick] (6);
				\path[-] (3) edge [blue,thick] (7);
		\end{tikzpicture}}
	\end{center}
	\caption{The graph used in our running example}\label{fig:graph}
\end{figure}
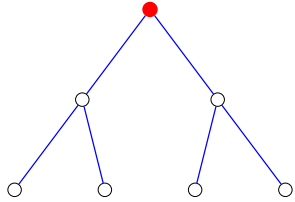		
Suppose that the root node is erased, and the remaining 6 nodes form the helper set. Since each of them contributes one symbol, the AF 
repair procedure requires transmission of $10$ field symbols over the edges to complete the repair. In particular, each of the two vertices adjacent to $v_f$ sends 3 symbols over the edge connecting it to $v_f.$ At the same time, using the IP procedure described above, these two nodes can each send only $l=2$ symbols, showing that a total of $8$ transmissions are sufficient. This shows the bandwidth saving capabilities of the IP procedure.}
\end{example}

%
%

\subsection{Linear-algebraic notation}\label{sec:tensors}
In this section we introduce elements of notation used below to define code families for which we design IP procedures of node repair.
	
For a linear space $U$ over $F$ we denote by $U^\ast$ its dual space; its elements are linear functionals of the form $\phi: U\to {F}$.
The spaces $U$ and $U^{\ast}$ have the same dimension and $(U^{\ast})^{\ast}\cong U$. A {\em restriction} of $\phi$ to a subspace
$V\subset U$ is denoted as $\phi \upharpoonright V$.
	
Let $U,V$ be linear spaces of dimensions $m$ and $n$, respectively, and let us fix bases
$\{\overline{u}_i\}_{i=1}^m$ and $\{\overline{v}_j\}_{j=1}^n.$ The tensor product of $U$ and $V$ is a linear space
	$
	U \otimes V=\{\sum_{ij}a_{ij}\overline{u}_i \otimes \overline{v}_j, a_{ij}\in {F}\}
	$
where $a_{ij} \in {F}$ and the tensors $\overline{u}_i \otimes \overline{v}_j$ form a basis in $U\otimes V$ (thus $\dim (U\otimes V)=mn$).
By definition, $u \otimes V=\{\sum_ja_j u \otimes \overline{v}_j, a_j\in {F}\}$ and $u \otimes V \subseteq U \otimes V$
The dual of a tensor product is the tensor product of duals, i.e., $(U \otimes V)^{\ast} = U^{\ast}\otimes V^{\ast}$. 
We denote by $T^pV := V^{\otimes p} $ the $p$-th tensor power of $V$. The dimension of $T^pV $ is $n^p$. 

The {\em symmetric power} $S^pV$ is a linear space of symmetric tensors, i.e., the subspace of $T^pV$ formed of the tensors invariant under transformations of the form 
$\overline{v}_1\otimes\dots\otimes\overline{v}_p\mapsto \overline{v}_{\sigma(1)}\otimes\dots\otimes\overline{v}_{\sigma(p)}$ for any
permutation $\sigma.$ We write symmetric tensors as
	$$
	\sum_{\stackrel{i_1,i_2,\dots,i_p}{1 \le i_1\le i_2\le \dots\le i_p\le n}} a_{i_1i_2\dots i_p}\overline{v}_{i_1} \odot \overline{v}_{i_2} \odot\dots \odot \overline{v}_{i_p},
	$$
where $\odot$ denotes the symmetric product and $a_{i_1i_2\dots i_p}$ are elements of $F$. 
By definition, $\dim(S^pV)=\binom{n+p-1}{p}.$ The space $S^pV$ can be  thought of as a projection 
       $$S: T^pV \rightarrow S^pV$$
that sends the tensor $\overline{v}_{i_1}\otimes \overline{v}_{i_2} \otimes \dots \otimes \overline{v}_{i_p}$ to $\overline{v}_{j_1} \odot \overline{v}_{j_2}\odot \dots \odot \overline{v}_{j_p}$ where $j_1\le j_2\le \dots \le j_p$ is a sorted copy of $i_1,i_2,\dots, i_p$. 

Finally, $x\wedge y$ denotes the exterior (alternating) product of vectors, characterized by $x\wedge y=-y\wedge x$;
hence $\overline{v}_{\sigma(1)}\wedge \overline{v}_{\sigma(2)}\wedge\dots\wedge \overline{v}_{\sigma(n)}=\sgn(\sigma)
\overline{v}_1\wedge \overline{v}_2\wedge\dots\wedge \overline{v}_n$, where $\sgn(\sigma)$ is the signature of the permutation $\sigma$. The {\em exterior power} $\Lambda^pV$ is a vector subspace of dimension $\binom np$ spanned by elements of the form $\overline{v}_{i_1} \wedge \overline{v}_{i_2} \wedge\dots \wedge \overline{v}_{i_p}, 1 \le i_1< i_2< \dots< i_p\le n$, so a vector in $\Lambda^pV$ has the form
	$$
	\sum_{\stackrel{i_1,i_2,\dots,i_q}{1 \le i_1< i_2< \dots< i_q\le n}} a_{i_1i_2\dots i_q}\overline{v}_{i_1} \wedge \overline{v}_{i_2} \wedge\dots \wedge \overline{v}_{i_q}.
	$$
The spaces $S^pV$ and $\Lambda^qV$ are formed by the action on $T^pV$ of the symmetric and alternating groups, respectively.	

By convention, $T^0V$, $S^0V$ and $\Lambda^0V$ are taken to be $F$.

%
%
	
	\subsection{Generalized PM codes}\label{sec:GPM}
An extension of the PM construction was recently proposed in \cite{Duursma2020}.
The construction of \cite[Sec.4]{Duursma2020} yields a family of MSR codes with parameters 
    $$
    n,k,d=\frac{(k-1)t}{t-1},l=\binom{k-1}{t-1}, M=t\binom kt, \quad 2\le t\le k\le n-1.
    $$
 In this section we follow the paradigm of evaluation codes to 
introduce an IP node repair procedure for this code family.
	
We start with a brief description of the code construction. Let $X={F}^t$ and $Y={F}^{k-t+1}$. Let $L:=X \otimes S^tY$ and note that
$\dim(L)=M.$ The encoding $\phi:L\to F^{nl}$ is an $F$-linear map. To define a concrete encoding procedure, we fix a basis in $L^*$ and 
let the coordinates of $\phi$ be the contents of the stored data. 

To support the data reconstruction and node repair tasks, we further choose, for each $i \in [n],$ a pair of vectors $x_i\in X$ and $y_i\in Y$ 
such that
	\begin{itemize}
		\item[(i)] Any $t$-subset of $x_i$'s spans $X$.
		\item[(ii)] Any $(k-t+1)$-subset of $y_i$'s spans $Y$.
		\item[(iii)] Any $d$ subspaces $x_i\otimes y_i \odot S^{t-2}Y$ span $X \otimes S^{t-1}Y.$
	\end{itemize}
The first two properties enable data reconstruction, while the node repair property depends on the third condition \cite{Duursma2020}. 

 With these assumptions, the contents of node $i$ correspond to the restriction 
	$\phi \!\upharpoonright\! x_i \otimes y_i \odot S^{t-1}Y \in  (x_i \otimes y_i \odot S^{t-1}Y)^\ast.$ This is consistent with the
code parameters: indeed, an element in $(x_i \otimes y_i \odot S^{t-1}Y)^\ast$ is completely described by its evaluations on a basis of the space $x_i \otimes y_i \odot S^{t-1}Y,$ which requires storing exactly $l=\binom{k-1}{t-1}$ evaluations.

As before, let $f \in [n]$ be the (index of the) failed node and let $D \subseteq [n]\setminus\{f\}$ be the helper set. 
Note that we wish to recover the restriction	$\phi \upharpoonright x_f \otimes y_f \odot S^{t-1}Y.$ 
Choose a basis for $x_f \otimes y_f \odot S^{t-1}Y$ and let 
  $
x_f \otimes y_f \odot (\overline{y}_{i_1}\odot \dots \odot \overline{y}
_{i_{t-1}})
  $ 
  be one of the basis vectors. Let 
    $$
    \{\underline{y}_{j_1} \odot \dots \odot \underline{y}_{j_{t-2}}, 1\le j_1\le j_2\dots\le
j_{t-2}\le n\}
    $$
 be a basis of $S^{t-2}Y$. 
The helper node $i \in D$ transmits to the failed node the restriction of $\phi$ to the set of vectors 
$\{x_i \otimes y_i \odot (\underline{y}_{j_1} \odot \dots \odot \underline{y}_{j_{t-2}})\odot y_f\}$.

\vspace*{.1in} It becomes easier to think of the above construction once we connect it with PM codes described in Sec.~\ref{sec:pm}.
For that, take $t=2.$ In this case, the file size is 
$$
\dim(L) = \dim(X \otimes S^2Y) = \dim(F^2 \otimes S^2F^{k-1}) = k(k-1).
$$
Node $i$ stores $\phi \upharpoonright (x_i \otimes y_i \odot Y)$, i.e, $\phi$ evaluated at a basis of $x_i \otimes y_i \odot Y,$ which requires 
storing exactly $\dim(Y)=k-1$ symbols. Each node can calculate the symbol $\phi(x_i\otimes y_i \odot y_f)\in F$. Now notice that $d$ vectors $\{x_i \otimes y_i\}$ span $X \otimes Y,$ and so $d$ values $\phi(x_i\otimes y_i \odot y_f)$ account for the evaluations of $\phi$ on $X \otimes Y \odot y_f$. From this set of evaluations we can calculate $\phi$ on $x_f \otimes Y \odot y_f$ which by the symmetric product property is the 
same as $x_f \otimes y_f \odot Y$. These evaluations form the contents of the failed node.
   
The IP repair for this construction works as follows. By (iii) above we can write
	\begin{align*}
		&x_f \otimes y_f \odot (\overline{y}_{i_1}\odot \dots \odot \overline{y}_{i_{t-1}})
		= x_f \otimes (\overline{y}_{i_1}\odot \dots \odot \overline{y}_{i_{t-1}}) \odot y_f\\
		&= \sum_{i \in D}\sum_{j_1,\dots,j_{t-2}}a_{i,j_1,\dots, j_{t-2}}x_i \otimes y_i \odot \cY_{j_1,\dots,j_{t-2}}\odot y_f,
	\end{align*}
where we denoted $\cY_{j_1,\dots,j_{t-2}}=\underline{y}_{j_1} \odot \dots \odot \underline{y}_{j_{t-2}}.$ Again similarly to the PM codes, any set $A \subseteq D$ with $|A| \ge d-k+1$ can transmit the following single evaluation of $\phi$
along the path to $f$:
	\begin{align*}
		&\phi\Big(\sum_{i \in A}\sum_{j_1,\dots,j_{t-2}}a_{i,j_1,\dots, j_{t-2}}x_i \otimes y_i \odot \cY_{j_1,\dots,j_{t-2}}\odot y_f\Big)\\
		&=\sum_{i \in A}\sum_{j_1,\dots,j_{t-2}}a_{i,j_1,\dots, j_{t-2}}\phi(x_i \otimes y_i \odot \cY_{j_1,\dots,j_{t-2}}\odot y_f).
	\end{align*}
	This can be done for all basis vectors of the chosen basis of $x_f \otimes y_f \odot S^{t-1}Y,$ and that requires $l=\binom{k-1}{t-1}$ transmissions, which matches the lower bound of Lemma~\ref{lemma:bound}. Note that the AF repair would require any set $A$
of helpers to transmit $\beta |A|$ symbols of $F$, which is greater than $l$ for $|A|> d-k+1.$ 

We have shown that IP repair can outperform direct relaying.  Let us give an example to support this claim (note also the remark
before Example~\ref{example:pm}).
\begin{example}\label{example:gen_pm}
{\rm	Consider the use of generalized PM codes for the graph shown in Fig.~\ref{fig:graph}. Suppose that $t=3$, i.e., 
the code parameters are $[n=7,k=5,d=6,l=6,\beta=3,M=30]$.
	Again considering the repair of the root node, the AF repair procedure would require transmission of 
$3\cdot(1+1+1+1+3+3)=30$ symbols while the IP procedure requires only $3\cdot(1+1+1+1+2+2)=24$ symbol transmissions. It is
easy to construct many other similar examples.}
\end{example}
%
%
\subsection{Operations on product spaces}\label{sec:tensors2}
In preparation for discussing IP repair with Moulin codes in the next section, we define (following \cite{Duursma2021})
two operations on tensor product spaces.
Let $V={F}^{d-k}$, $W={F}^k,$ and $U=V\oplus W\cong{F}^d$. We shall be dealing with spaces of the form $T^pV \otimes V \otimes \Lambda^qW$ and $T^pV \otimes W \otimes \Lambda^qW$ where $p+q=s-1$ and $p,q\ge 0.$ Note that 
   $$
   T^pV \otimes V \otimes \Lambda^qW \oplus T^pV \otimes W \otimes \Lambda^qW = T^pV \otimes U \otimes \Lambda^qW,
   $$
and hence there are natural inclusion maps from each of these spaces to their direct sum, as well as natural projection maps from the direct sum to these spaces.

Define the \emph{co-wedge product} operator inductively as follows: 
	\begin{align*}
		\nabla : T^pV \otimes \Lambda^1W &\rightarrow T^pV \otimes W\\
		\nu \otimes w_1 &\rightarrow \nu \otimes w_1\\
		\nabla: T^pV \otimes \Lambda^2W &\rightarrow T^pV \otimes W \otimes \Lambda^1W\\
		\nu \otimes w_1 \wedge w_2 &\rightarrow \nu \otimes w_1\otimes w_2 - \nu \otimes w_2 \otimes w_1\\
		\nabla : T^pV \otimes \Lambda^{q+1}W &\rightarrow T^pV \otimes W \otimes \Lambda^qW\\
		\nu \otimes \omega \wedge w_1 &\rightarrow \nabla(\nu \otimes \omega) \wedge w_1+(-1)^q\nu \otimes w_1 \otimes \omega,
	\end{align*}
where on the last line $\omega\in \Lambda^{q}W$ and $w_1\in W.$
Thus, as a result of applying $\nabla,$ the degree of the wedge product decreases by one. For tensors of higher ranks, $\nabla$ applies term-wise, and the images are added. The operator $\nabla$ is clearly linear. Next we define the {\em coboundary operators (differentials)} which increase the degree of tensors. For any $v \in V$ define the linear transformation inductively:
	\begin{gather*}
		\partial^V_v : \Lambda^qW \rightarrow U \otimes \Lambda^qW\\
		\omega \rightarrow 0\\
		\partial^V_v : U \otimes\Lambda^qW \rightarrow T^1V \otimes U \otimes \Lambda^qW\\
		\hspace*{.2in}u \otimes \omega \rightarrow v \otimes u \otimes \omega\\
		\partial^V_v : T^pV \otimes U \otimes\Lambda^qW \rightarrow T^{p+1}V \otimes U \otimes \Lambda^qW\\
		\hspace*{.4in}\nu \otimes u \otimes \omega  \rightarrow \partial^V_v(\nu) \otimes u \otimes \omega+ (-1)^p\nu\otimes v\otimes u \otimes \omega
	\end{gather*}
for all $p\ge 1,q\ge 0$. Note that when $q=0$ we take $\Lambda^0=F.$
In the other direction, for every $w \in W$ and $p\ge 0, q\ge 1$ define the mappings
	\begin{align*}
		\partial^W_w: T^pV \otimes U &\rightarrow T^pV \otimes U \otimes \Lambda^1W\\
		\nu \otimes u &\rightarrow(-1)^p\nu\otimes u\otimes w\\
		\partial^W_w: T^pV \otimes U \otimes \Lambda^qW&\rightarrow T^pV \otimes U \otimes \Lambda^{q+1}W\\
		\nu \otimes u \otimes \omega &\rightarrow(-1)^{p+q}\nu\otimes u\otimes \omega \wedge w..
	\end{align*}
Finally for $u\in U$ such that $u=v+w, v\in V,w\in W$, define 
	$$
	\partial^U_u = \partial^V_v+\partial^W_w.
	$$
Thus, the overall diagram has the form
\begin{center}
\begin{tikzcd}
{\dots}\rar[shorten <= 4em, very near end, "\partial_w^W"]&T^{p+1}V\otimes U\otimes \Lambda^q W\arrow[r, shorten >=0.3in, near start, "\partial_w^W"]&{\dots}\\
T^PV\otimes U\otimes \Lambda^{q-1}W \arrow[r, "\partial_w^W"] \arrow[u, "\partial_v^V"]
&T^pV\otimes U\otimes \Lambda^q W \arrow[u, "\partial_v^V"]\arrow[r, "\partial_w^W"]
&T^pV\otimes U\otimes \Lambda^{q+1}W\arrow[u, "\partial_v^V"]\\
{\dots}\arrow[u, "\partial_v^V"]\rar[shorten <= 4em, very near end, "\partial_w^W"] &T^{p-1}V\otimes U\otimes \Lambda^q W\arrow[u, "\partial_v^V"]\arrow[r, shorten >=0.4in, near start, "\partial_w^W"] &{\dots}\arrow[u, "\partial_v^V"]
\end{tikzcd}.
\end{center}
Except for $U$, this diagram follows the standard construction of the tensor product of chain complexes \cite[Sec.10.1]{Rotman2009}, and the differentials
satisfy the usual relations: $(\partial_v^V)^2=0, (\partial_w^W)^2=0$ for all $v\in V, w\in W$, and $\partial_v^V\partial_w^W+\partial_w^W\partial_v^V=0.$

	\subsection{IP for Interior Point Codes}\label{sec:moulin}
In this section we switch attention from MSR codes to a class of intermediate-point evaluation codes introduced recently by Duursma et al. in \cite{Duursma2021} (see also \cite[Sec.~7.2]{Ramkumar2022}). Let $s$ be an integer such that $n-1 \ge d \ge k \ge s-1\ge 1$. The family of {\em Moulin codes} 
that we discuss has parameters $[n,k,d,l, \beta, M]$ that satisfy the relations
	\begin{equation}
	\left.\begin{array}{@{\hspace{-.2in}}c}	\textstyle{l = \sum_{p+q=s-1}(d-k)^p\binom{k}{q}} \\[.1in]
		\textstyle{\beta = \sum_{p+q=s-2}(d-k)^p\binom{k-1}{q}} \\[.1in]
		\textstyle{M = \sum_{p+q=s-1}d(d-k)^p\binom{k}{q} - \sum_{p+q=s}(d-k)^p\binom{k}{q}}, 
		\end{array}\right\}\label{eq:M}
	\end{equation}
where $p\ge 0,q\ge0$ throughout. 	 
	 While the general idea of implementing IP for this code family is the same as before (node contents are given by restrictions of linear maps to subspaces), the detailed description relies on the operations on tensor products introduced above. 

For a fixed $s$ satisfying the constraints above, the file $\cF$ is chosen to be an element $\phi$ of the dual space 
	\begin{equation}\label{eq:ds}
	\bigoplus_{p+q=s-1}(T^pV \otimes U \otimes \Lambda^qW)^\ast,
	\end{equation}
where $V,W,U$ are as in the previous section.	
The parity checks of the code correspond to the
condition of having the following diagrams commute:
\begin{center}
\begin{tikzcd}[column sep=-.3in]
T^pV \otimes \Lambda^{q+1}W \arrow[rr, "\phi"] \arrow[dr, "\nabla"] & & F\\
&T^pV \otimes W \otimes \Lambda^q W \arrow[ur, "\phi"] & \\
\end{tikzcd}
\end{center}
\vspace*{-.2in} for all $p\ge 1,q\ge 0$ with $p+q=s-1$, and
\begin{equation}\label{eq:root-check}
	\begin{tikzcd}[column sep=-.1in]
		\Lambda^{q+1}W \arrow[rr, " 0 "] \arrow[dr, "\nabla"] & & F\\
		& W \otimes \Lambda^q W \arrow[ur, "\phi"] & \\
	\end{tikzcd}\hspace*{.3in}
	\begin{tikzcd}[column sep=.2in]
		T^pV  \arrow[rr, "\phi"] \arrow[dr, "\nabla"] & & F\\
		&0 \arrow[ur, "0"] & \\
	\end{tikzcd}
\end{equation}

\vspace*{-.2in}\noindent for $p=0$ and $q=-1$, respectively. 

The file size equals the dimension of the direct sum of the vector spaces \eqref{eq:ds} minus the dimension of the parity check space, 
which is exactly $M$ in \eqref{eq:M}.
To each node $i\in[n]$ we associate a vector $u_i \in U$ such that any $d$ of these vectors span $U$ and any $k$ vectors 
span $U/V$ {under the quotient map $U\to U/V$}.
The $i$-th node stores the following restriction of the mapping $\phi$:
	$$
	\phi \upharpoonright\!\! \bigoplus_{p+q=s-1}(T^pV \otimes u_i \otimes \Lambda^qW).
	$$
	The size $l$ of the node equals $\dim(T^pV \otimes u_i \otimes \Lambda^qW),$ given by $l$ in \eqref{eq:M}. 
	
Now suppose that node $f \in [n]$ fails and we are provided with a set $D \subseteq [n]\setminus\{f\}$ of $d$ helpers. 
Each node $h \in D$ {provides the restrictions of its contents to coboundaries:}
  \begin{equation}\label{eq:restriction}
	\phi \upharpoonright \partial_{u_f}^U(T^pV \otimes u_h \otimes \Lambda^qW)
  \end{equation}
for each pair $p,q$ with $p+q = s-2$. 
We shall need the following result.
	\begin{lemma}[\cite{Duursma2021}, Thm.~4.1]\label{lemma:dd}
		For all possible $p,q \ge 0$, such that $p+q =s-2$ and all $\nu \in T^pV, \omega \in \Lambda^qW$, we have
		$$
		\phi(\partial_{u_f}^U(\nabla(\nu \otimes \omega)))- \phi(\partial_{u_f}^U(\nu \otimes \omega)) = (-1)^p\phi(\nu \otimes u_f \otimes \omega).
		$$
If $p=0$, then $\phi(\partial_{u_f}^U(\omega))=0$ due to \eqref{eq:root-check}.
	\end{lemma}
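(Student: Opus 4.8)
The plan is to induct on $q$ (equivalently on $p$, since $p+q=s-2$ is fixed once we fix $s$), unwinding the inductive definitions of $\nabla$ and $\partial^U_{u_f}$ and using the commuting diagrams that define the code's parity checks. The base case is $q=0$: here $\omega\in\Lambda^0W=F$, so $\nabla(\nu\otimes\omega)=\nu\otimes\omega$ (by the first line of the definition of $\nabla$, or rather the identification $T^pV\otimes\Lambda^0W = T^pV$ feeding into $T^pV\otimes U$ through the inclusion), and the left-hand side $\phi(\partial^U_{u_f}(\nabla(\nu\otimes\omega)))-\phi(\partial^U_{u_f}(\nu\otimes\omega))$ is identically $0$. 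On the right-hand side, $(-1)^p\phi(\nu\otimes u_f\otimes\omega)$ must then also vanish; for $p\ge 1$ this is exactly the commuting triangle with $\nabla:T^pV\otimes\Lambda^{1}W\to T^pV\otimes W$, which is the identity in degree one, combined with the first diagram in the statement — wait, more carefully: the $q=0$ parity-check diagram says $\phi$ applied to $T^pV\otimes W\otimes\Lambda^0W$ factors through $\nabla$ from $T^pV\otimes\Lambda^1W$, so $\phi(\nu\otimes u_f\otimes\omega)$ with $\omega\in\Lambda^0$... I should instead treat $q=0$ by noting the statement for $q=0$ reads $\phi(\partial^U_{u_f}(\nu\otimes u_f))$ — here one uses $(\partial^V_{v})^2=0$ together with the splitting $u_f=v_f+w_f$. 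The cleanest base case is actually $q=-1$ (the degenerate case with $p=0$) handled by the second diagram in \eqref{eq:root-check}, and $q=0$, which I would verify by hand from the definitions.

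For the inductive step, I would write $\omega=\omega'\wedge w_1$ with $\omega'\in\Lambda^{q-1}W$ and $w_1\in W$, apply the recursive formula
$$
\nabla(\nu\otimes\omega'\wedge w_1)=\nabla(\nu\otimes\omega')\wedge w_1+(-1)^{q-1}\nu\otimes w_1\otimes\omega',
$$
then apply $\partial^U_{u_f}=\partial^V_{v_f}+\partial^W_{w_f}$ to both terms and to $\nu\otimes\omega'\wedge w_1$ directly, and finally apply $\phi$. The key algebraic facts I would use to simplify the resulting expression are: (a) the sign-twisted Leibniz rules built into the inductive definitions of $\partial^V_v$ and $\partial^W_w$, which let me pull $w_1$ and the $\partial^W_{w_f}w_f$ "new" wedge factor past existing factors at the cost of controlled signs; (b) the anticommutation relations $(\partial^V_v)^2=0$, $(\partial^W_w)^2=0$, $\partial^V_v\partial^W_w+\partial^W_w\partial^V_v=0$ stated just before this subsection, which kill the cross terms where the same differential is applied twice; and (c) the inductive hypothesis applied to $(\nu,\omega')$ at bidegree $(p,q-1)$. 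The parity-check diagrams enter precisely where $\phi$ is evaluated on an element of $T^pV\otimes W\otimes\Lambda^{q'}W$ that is the $\nabla$-image of something in $T^pV\otimes\Lambda^{q'+1}W$: there $\phi\circ\nabla$ may be replaced by $\phi$ on the higher exterior degree, which is what produces the $\phi(\nu\otimes u_f\otimes\omega)$ term on the right (with $u_f$ entering through the $w_f$-component, since $\nabla$ only touches the $W$-factor) and makes the $V$-component contribution telescope away.

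The main obstacle I anticipate is bookkeeping of signs: $\nabla$, $\partial^V_{v_f}$, and $\partial^W_{w_f}$ each carry $(-1)$ powers indexed by the current $p$ and $q$, and in the inductive step these get composed and shifted, so the danger is an off-by-one in an exponent that makes the telescoping fail. I would manage this by fixing, once and for all, a normal form for a basis tensor (say, all $V$-factors first in a fixed order, then the $U$-slot, then the $W$-wedge factors in increasing index) and computing the sign incurred by each elementary move relative to that normal form, rather than trying to track signs symbolically through the recursion. A secondary subtlety is the interaction with the degenerate diagrams in \eqref{eq:root-check} when the induction reaches $p=0$: there $\partial^V_{v_f}$ acts as $0$ on the $\Lambda^qW$ summand (first line of the definition of $\partial^V_v$) and the first diagram of \eqref{eq:root-check} forces the corresponding $\phi$-value to vanish, which is exactly the boundary condition that lets the recursion close — so I would isolate the $p=0$ case and check it matches the "$\phi(\partial_{u_f}^U(\omega))=0$" clause of the statement before running the general inductive step for $p\ge 1$.
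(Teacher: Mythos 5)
First, a point of comparison: the paper does not prove this statement — Lemma~\ref{lemma:dd} is imported from \cite{Duursma2021} (Thm.~4.1) and used as a black box — so your attempt must be judged on its own merits rather than against an in-paper argument. Your skeleton is the right shape: induct on $q$, peel off a wedge factor via $\nabla(\nu\otimes\omega'\wedge w_1)=\nabla(\nu\otimes\omega')\wedge w_1+(-1)^{q-1}\nu\otimes w_1\otimes\omega'$, split $\partial^U_{u_f}=\partial^V_{v_f}+\partial^W_{w_f}$, and use the commuting triangles to trade $\phi\circ\nabla$ for $\phi$ on the adjacent summand. You also correctly identify that the parity checks are what make the right-hand side appear: e.g., for $p=0$, $q=1$ the identity reduces to $\phi(w_1\otimes w_f)=\phi(w_f\otimes w_1)$, which is exactly $\phi(\nabla(w_f\wedge w_1))=0$ from \eqref{eq:root-check}, together with $\phi(\nabla(v_f\otimes w_1))=\phi(v_f\otimes w_1)$ from the $p\ge1$ diagram.

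Nevertheless the proposal has genuine gaps. (i) The base case is never established, and your first pass at it is wrong: for $\omega\in\Lambda^0W$ the map $\nabla$ on $T^pV\otimes\Lambda^0W=T^pV$ is the \emph{zero} map (its codomain is $T^pV\otimes W\otimes\Lambda^{-1}W=0$, which is what the second diagram of \eqref{eq:root-check} records), not the identity; the $q=0$ instance therefore reads $-\phi(\partial^U_{u_f}(\nu))=(-1)^p\phi(\nu\otimes u_f)$ and needs its own computation using the vanishing of $\phi$ on the top tensor power of $V$, which you explicitly defer. (ii) The inductive step is where the entire content lives — the lemma is a sign-exact identity — and you do not carry it out; ``fix a normal form and track signs'' is a plan, not a proof. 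The tools you name for it are also partly the wrong ones: the relations $(\partial^V_v)^2=(\partial^W_w)^2=0$ and $\partial^V_v\partial^W_w+\partial^W_w\partial^V_v=0$ concern compositions of two differentials, which never occur here; what is needed is a commutator-type relation between $\nabla$ and a single application of $\partial^U_{u_f}$, which you never state. (iii) You never pin down in which summand $T^aV\otimes U\otimes\Lambda^bW$ each term lives before $\phi$ can be applied; doing so is a prerequisite for the sign computation and would also have revealed that the stated condition should be $p+q=s-1$ (as it is where the lemma is invoked, and as required for $\phi(\nu\otimes u_f\otimes\omega)$ to be defined), not $s-2$.
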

The right-hand side of the above equation is one coordinate of the failed node, and the left-hand side can be computed from
\eqref{eq:restriction}.

 The statement of the next lemma appears in \cite{Duursma2021} without a proof (as a statement in the proof of \cite[Thm.~4.1]{Duursma2021}). We include the proof here to set up the notation.
	\begin{lemma}\label{lemma:lin}
		1) For all possible $p\ge 1,q \ge 0$ such that $p+q =s-1$ and all $\nu \in T^pV, \omega \in \Lambda^qW$, the tensor $\nu \otimes \omega$ is contained in the linear span of the union of the spaces $\{T^{p'}V \otimes u_h \otimes \Lambda^{q'}W\}_{h\in D, p'+q'=s-2}$.
		
		2) For all possible $p,q \ge 0$, such that $p+q =s-1$, for all $\nu \in T^pV, \omega \in \Lambda^qW$, $\nabla(\nu \otimes \omega)$ is contained in the linear span of the union of the spaces $\{T^{p'}V \otimes u_h \otimes \Lambda^{q'}W\}_{h\in D, p'+q'=s-2}$.
	\end{lemma}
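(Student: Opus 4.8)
The idea is to show that the vectors $u_h$, $h \in D$, when fed into the coboundary maps, generate enough of the tensor-product complex to capture any $\nu \otimes \omega$ in the appropriate degree. The two parts are tightly linked: since $\nabla(\nu \otimes \omega)$ is a sum of terms of the form $\nu' \otimes w \otimes \omega'$ lying in the spaces $T^{p'}V \otimes W \otimes \Lambda^{q'}W$ with $p'+q' = s-2$, part 2 will follow from part 1 once we know that every such pure tensor $\nu' \otimes w \otimes \omega'$ (with $w\in W \subset U$) is in the span of the $\{T^{p'}V \otimes u_h \otimes \Lambda^{q'}W\}_{h\in D}$. So really there are two things to prove: (a) for $p \ge 1$, $T^pV \otimes V \otimes \Lambda^qW$ (hence any $\nu \otimes \omega$ with $\nu \in T^pV$, once we note $\nu \otimes \omega$ for $\nu$ of rank $p$ really means something with an extra $V$-factor — more precisely we must track whether $\nu \otimes \omega$ sits in $T^pV \otimes \Lambda^qW$ or inside a space with a $U$-slot) is spanned by the $u_h$-images, and (b) the analogous statement with a $W$-slot. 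I would first pin down precisely which ambient space the statement lives in: reading it against Lemma~\ref{lemma:dd} and the node contents $\phi \upharpoonright \bigoplus_{p+q=s-1}(T^pV \otimes u_i \otimes \Lambda^qW)$, the natural reading is that $\nu \otimes \omega \in T^pV \otimes \Lambda^q W$ is to be viewed after inserting a $U$-slot in the middle, i.e. we want $\nu \otimes u \otimes \omega$ for the relevant $u$'s to be reachable; but the cleanest formulation is just: the span of $\bigcup_{h \in D}(T^{p'}V \otimes u_h \otimes \Lambda^{q'}W)$ over $p'+q'=s-2$ contains all of $\bigoplus_{p+q=s-1}T^pV \otimes \Lambda^q W$ under the identification used in Lemma~\ref{lemma:dd}.

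First I would reduce to a single homogeneous bidegree. Fix $p, q \ge 0$ with $p+q = s-1$. The tensors $\nu \otimes \omega$ with $\nu$ a basis monomial of $T^pV$ and $\omega$ a basis monomial of $\Lambda^qW$ span the space in question, so it suffices to handle those. Now I use the spanning hypothesis on the $u_h$: any $d$ of the vectors $u_h$ span $U = V \oplus W$, and since $|D| = d$, the set $\{u_h : h \in D\}$ spans $U$. Therefore for any fixed $\nu \in T^pV$ and $\omega \in \Lambda^qW$ and any target vector $u \in U$ in the middle slot, $\nu \otimes u \otimes \omega$ is a linear combination of $\{\nu \otimes u_h \otimes \omega : h \in D\}$, and each of these lies in $T^pV \otimes u_h \otimes \Lambda^qW$ — which is one of the spaces $T^{p'}V \otimes u_h \otimes \Lambda^{q'}W$ with $p' = p, q' = q$, and indeed $p'+q' = p+q = s-1$. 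Wait — the target says $p'+q' = s-2$, not $s-1$. So the reduction must instead go through the \emph{coboundary} maps, which raise the total degree by one: that is, $\nu \otimes u \otimes \omega$ (total tensor-degree contributing $p+1$ on the $V$-side or $q+1$ on the $W$-side once $u$ is resolved into its $V$- and $W$-components) is obtained by applying $\partial^U_{u_f}$ or rather is the \emph{image} of something in degree $s-2$. Concretely: write $u_h = v_h + w_h$; then $\nu \otimes u_h \otimes \omega = \nu \otimes v_h \otimes \omega + \nu \otimes w_h \otimes \omega$, and — reading off the definitions of $\partial^V_{v}$ and $\partial^W_w$ in Sec.~\ref{sec:tensors2} — each summand is (up to sign) in the image of a coboundary map applied to a tensor in $T^{p-1}V \otimes u_h \otimes \Lambda^q W$ or $T^p V \otimes u_h \otimes \Lambda^{q-1}W$, both of which have total degree $s-2$. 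Chaining this down, $\nu \otimes \omega$ of total degree $s-1$ is expressible through the degree-$(s-2)$ spaces $T^{p'}V \otimes u_h \otimes \Lambda^{q'}W$. This is where the bulk of the work lies.

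For part 1, the induction should be on $p$ (the $T V$-degree), with the base case $p=1$: here $\nu = v \in V$, and $v \otimes \omega$ with $\omega \in \Lambda^{s-2}W$ — since $\{u_h\}$ spans $U \supseteq V$, write $v = \sum_h c_h u_h$, so $v \otimes \omega = \sum_h c_h u_h \otimes \omega$, and each $u_h \otimes \omega \in u_h \otimes \Lambda^{s-2}W = T^0 V \otimes u_h \otimes \Lambda^{s-2}W$, which is exactly one of the allowed spaces with $p'=0, q'=s-2$. The inductive step peels off one $V$-factor: $\nu = \nu' \otimes v$ with $\nu' \in T^{p-1}V$, $v \in V$; expand $v = \sum_h c_h u_h$ and observe $\nu' \otimes v \otimes \omega = \sum_h c_h \nu' \otimes u_h \otimes \omega$. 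But now the middle slot holds $u_h$, not $v$, so each term $\nu' \otimes u_h \otimes \omega$ lives in $T^{p-1}V \otimes u_h \otimes \Lambda^q W$, i.e. total degree $(p-1) + q = s-2$ — done, directly, no further induction needed. So part 1 is in fact immediate once you allow yourself to move one $V$ from the $T^pV$ factor into the $U$-slot, using only that $\{u_h\}_{h\in D}$ spans $U$. For part 2: $\nabla(\nu \otimes \omega)$ is by construction a linear combination of tensors $\nu'' \otimes w'' \otimes \omega''$ with $\nu'' \in T^pV$, $w'' \in W$, $\omega'' \in \Lambda^{q-1}W$ (inspecting the inductive definition of $\nabla$ — each step moves a $w$ out of the wedge into a $W$-slot, so the $TV$-degree is unchanged at $p$ and the wedge-degree drops to $q-1$). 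Each such term: write $w'' = \sum_h c_h u_h - (\text{its } V\text{-part})$; more simply, since $w'' \in W \subseteq U = \mathrm{span}\{u_h\}$, write $w'' = \sum_h c_h u_h$ and get $\nu'' \otimes w'' \otimes \omega'' = \sum_h c_h \nu'' \otimes u_h \otimes \omega''$, with each term in $T^pV \otimes u_h \otimes \Lambda^{q-1}W$, total degree $p + (q-1) = s-2$.

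\textbf{The main obstacle.} The real subtlety — and the step I would spend the most care on — is bookkeeping the ambient spaces and the signs so that the claimed membership is in $T^{p'}V \otimes u_h \otimes \Lambda^{q'}W$ with $p'+q' = s-2$ exactly, and reconciling the statement's phrasing "$\nu \otimes \omega$" (which literally lives in $T^pV \otimes \Lambda^qW$, total degree $p+q=s-1$, with \emph{no} middle $U$-slot) with the target spaces (which have a middle $U$-slot and total degree $s-2$). The resolution is that one implicitly identifies $\nu \otimes \omega$ with a suitable preimage/companion having a $U$-slot — this is exactly the content of the left-hand side of Lemma~\ref{lemma:dd} being computable from the restrictions \eqref{eq:restriction} — and once that identification is fixed the argument is the short linear-algebra computation above. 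I would state the identification explicitly at the start of the proof (peel a factor of $V$ resp. insert via $\nabla$ as the two cases dictate), then the spanning of $U$ by $\{u_h\}_{h \in D}$ does all the remaining work; no clever argument is needed beyond careful sign and degree tracking, which I would relegate to the notation already set up in Sec.~\ref{sec:tensors2}.
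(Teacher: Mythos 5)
Your final argument is correct and is essentially identical to the paper's proof: for part 1 you peel the last $V$-factor of $\nu$ into the middle $U$-slot, view $\nu\otimes\omega$ as an element of $T^{p-1}V\otimes U\otimes\Lambda^qW$, and use that $\{u_h\}_{h\in D}$ spans $U$; for part 2 you observe $\nabla(\nu\otimes\omega)\in T^pV\otimes W\otimes\Lambda^{q-1}W\subset T^pV\otimes U\otimes\Lambda^{q-1}W$ and apply the same spanning argument. The intermediate detour through coboundary maps and the induction framing are unnecessary (as you yourself note), but the core reasoning and degree bookkeeping match the paper exactly.
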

\begin{proof}
		1) Fix $p_1\ge 1,q_1>0$ such that $p_1+q_1=s-1$. Let $\nu \in T^{p_1}V$ and $\omega \in \Lambda^{q_1}W$. Fix a basis $\{\overline{\nu}_i\otimes u_h \otimes \overline{\omega}_i\}_{i=1}^{(d-k)^{p_1-1}\binom{k}{q_1}}$ of $T^{p_1-1}V \otimes u_h \otimes \Lambda^qW$. Since 
		the set $\{u_h\}_{h\in D}$ spans $U$, we can write 
		\begin{align*}
			\nu \otimes \omega &= \overbrace{(\nu_1 \otimes \dots \otimes \nu_{p_1-1})}^{T^{p_1-1}V}\otimes \underbrace{\nu_{p_1}}_{U} \otimes \overbrace{\omega}^{\Lambda^qW},
			\end{align*}
			and hence $\nu\otimes\omega$ is an element of $T^{p_1-1}V\otimes U \otimes \Lambda^{q_1}W$. So we can write
			$
			\nu \otimes \omega= \sum_{i,h}a_{i,h}(\overline{\nu}_i\otimes u_h \otimes \overline{\omega}_i).
		$
		
		2) Similarly, for a basis $\{\underline{\nu}_j\otimes u_h \otimes \underline{\omega}_j\}_{j=1}^{(d-k)^{p_1}\binom{k}{q_1-1}}$ of $T^{p_1}V \otimes u_h \otimes \Lambda^{q_1-1}W$, we can write
		$
			\nabla{(}\nu \otimes \omega{)} = \sum_{j,h}b_{j,h}(\underline{\nu}_j\otimes u_h \otimes \underline{\omega}_j).
		$
\end{proof}

Our main statement in this part is the next lemma, which justifies the IP repair procedure.
	\begin{lemma} Let $A \subseteq D.$ For the repair of $f$, it is sufficient for the nodes in the set $A$ to transmit $l$ symbols.
	\end{lemma}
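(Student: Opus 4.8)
\emph{Proof idea.} The content of the failed node $v_f$ is a vector of $l$ field symbols: fix a basis $\{b_m\}_{m=1}^{l}$ of $\bigoplus_{p+q=s-1}(T^pV\otimes u_f\otimes\Lambda^qW)$, with $b_m=\nu_m\otimes u_f\otimes\omega_m$, $\nu_m\in T^{p_m}V$, $\omega_m\in\Lambda^{q_m}W$; then the content is $(\phi(b_1),\dots,\phi(b_l))$. The plan is to show that for every $m$ we can write $\phi(b_m)=\sum_{h\in D}c_{m,h}$, where $c_{m,h}$ is a linear functional of the helper message \eqref{eq:restriction} transmitted by node $h$, with coefficients that depend only on the code (the vectors $u_i$ and fixed choices of bases) and not on the stored file $\phi$. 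Granting this, the nodes in $A$ only ever need to forward, along each edge of the repair tree leaving the region they occupy, the $l$ partial sums $\bigl(\sum_{h}c_{m,h}\bigr)_{m}$ accumulated from the helpers behind that edge; the remaining contributions from $D\setminus A$ are added downstream, and $v_f$ ends up with $\bigl(\sum_{h\in D}c_{m,h}\bigr)_m=(\phi(b_m))_m$.

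To obtain the decomposition, first fix, for each $h\in D$ and each pair $(p',q')$ with $p'+q'=s-2$, a basis of $T^{p'}V\otimes u_h\otimes\Lambda^{q'}W$; by definition the coordinates of the message \eqref{eq:restriction} sent by $h$ are exactly the numbers $\phi\bigl(\partial_{u_f}^U(e)\bigr)$ as $e$ ranges over these bases. Now apply Lemma~\ref{lemma:dd} to the pair $(\nu_m,\omega_m)$: up to the sign $(-1)^{p_m}$ it writes $\phi(b_m)$ as $\phi\bigl(\partial_{u_f}^U(\nabla(\nu_m\otimes\omega_m))\bigr)-\phi\bigl(\partial_{u_f}^U(\nu_m\otimes\omega_m)\bigr)$, the second summand being absent when $p_m=0$ because of the root parity check \eqref{eq:root-check}. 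By Lemma~\ref{lemma:lin}, both $\nu_m\otimes\omega_m$ and $\nabla(\nu_m\otimes\omega_m)$ lie in the linear span of $\bigcup_{h\in D}\bigcup_{p'+q'=s-2}T^{p'}V\otimes u_h\otimes\Lambda^{q'}W$; expanding them in the bases fixed above and using linearity of $\partial_{u_f}^U$ and of $\phi$, each of the two terms becomes a fixed linear combination of the numbers $\phi\bigl(\partial_{u_f}^U(e)\bigr)$. Collecting together the contributions coming from a common $h$ yields $\phi(b_m)=\sum_{h\in D}c_{m,h}$ with $c_{m,h}$ a $\phi$-independent linear combination of the coordinates of node $h$'s message.

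The IP repair procedure is then the natural accumulate-and-combine scheme along the repair tree: starting at the leaves, each node computes its own summands $c_{m,h}$ from its helper message \eqref{eq:restriction}, adds them coordinatewise to the $l$ partial sums received from its children, and forwards the resulting $l$ symbols toward $v_f$. Thus every edge of the tree carries an $l$-vector regardless of how many helpers lie behind it; in particular the set $A$ contributes only $l$ transmissions (as opposed to $\beta|A|$ under the AF strategy), and once all of $D$ has been absorbed the root recovers $(\phi(b_m))_m$, i.e., its content.

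The conceptual work is done by Lemmas~\ref{lemma:dd} and~\ref{lemma:lin}; what requires care --- and is the only real obstacle --- is the bookkeeping: verifying that the instances of Lemma~\ref{lemma:dd} used above exhaust all $l$ basis vectors $b_m$ (in particular handling the $p_m=0$ slab via \eqref{eq:root-check}), confirming that the expansion coefficients furnished by Lemma~\ref{lemma:lin} really are independent of $\phi$ (they are functions of the fixed vectors $u_i$ and of the chosen bases only), and matching the terms of the expansion with the coordinates of the helper messages \eqref{eq:restriction}.
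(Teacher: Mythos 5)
Your argument is correct and follows essentially the same route as the paper's own proof: both use Lemma~\ref{lemma:dd} to express each coordinate $\phi(\nu\otimes u_f\otimes\omega)$ of the failed node as $(-1)^p$ times a difference of coboundary evaluations (dropping the second term when $p=0$ via \eqref{eq:root-check}), expand $\nu\otimes\omega$ and $\nabla(\nu\otimes\omega)$ via Lemma~\ref{lemma:lin} in bases of the spaces $T^{p'}V\otimes u_h\otimes\Lambda^{q'}W$, and observe that $A$ need only forward the $l$ partial sums over $h\in A$. The bookkeeping points you flag (in particular that the coefficients $a_{i,h},b_{j,h}$ depend only on the fixed vectors $u_i$ and chosen bases, not on $\phi$) are exactly what the paper's proof relies on implicitly.
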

	
	\begin{proof}
		Fix $p,q \ge 0$ such that $p+q=s-1$. Let $\nu \in T^pV, \omega \in \Lambda^qW$. If $p\ge 1$ then by parts (1) and (2) of Lemma \ref{lemma:lin}, we have
		\begin{align*}
			\phi(\partial_{u_f}^U(\nabla(\nu \otimes \omega)))&- \phi(\partial_{u_f}^U(\nu \otimes \omega))\\
			&=
		\sum_{j,h}b_{j,h}\phi(\partial_{u_f}^U(\underline{\nu}_j\otimes u_h \otimes \underline{\omega}_j)) 		
			- \sum_{i,h}a_{i,h}\phi(\partial_{u_f}^U((\overline{\nu}_i\otimes u_h \otimes \overline{\omega}_i)).
		\end{align*}
		Note that if $p=0$ then the second term on the LHS $\phi(\partial_{u_f}^U(\nu \otimes \omega))$ is already 0 by \eqref{eq:root-check} and we simply write using Lemma~\ref{lemma:lin}(2)
		$$ 
			\phi(\partial_{u_f}^U(\nabla(\nu \otimes \omega)))=
			\sum_{j,h}b_{j,h}\phi(\partial_{u_f}^U(\underline{\nu}_j\otimes u_h \otimes \underline{\omega}_j)) 	.
		$$ 
	By Lemma~\ref{lemma:dd}, the LHS equals $(-1)^p\phi(\nu \otimes u_f \otimes \omega),$ and we have recovered one symbol of the failed node. For this, the set $A$ need to transmit the element 
		$$
	\sum_{h\in A}\!\Big[\!\sum_jb_{j,h}\phi(\partial_{u_f}^U(\underline{\nu}_j\otimes u_h \otimes \underline{\omega}_j)) - \sum_{i}a_{i,h}\phi(\partial_{u_f}^U(\overline{\nu}_i\otimes u_h \otimes \overline{\omega}_i)\!\Big].
		$$
		Doing this for any fixed basis  $\{\nu,\omega\}$ of $T^pV\otimes u_f\otimes \Lambda^qW,$ for all values of $p,q$, requires the set 
		$A$ to transmit a total of $l$ symbols. 
	\end{proof}
Observe that whenever $|A|\ge \lceil\frac l\beta\rceil$, the IP protocol given by this lemma results in communication savings
compared to the AF repair.

\begin{example}\label{example:moulin}
	{\rm We again use the graph in Fig.~\ref{fig:graph} to demonstrate the savings in required transmission bandwidth for repair. The parameters of the code construction are: $[n=7,k=5,d=6, s=4, l=26,\beta=11,M=125]$. Note that this code satisfies $d\beta = 66 > l > (d-k+1)\beta = 22$ and operates at an interior point of the storage-bandwidth trade-off curve.

	Considering the repair of the root node, the AF repair procedure would require a total transmission of $11\times 4+33\times 2=110$ symbols while performing the IP procedure in the nodes neighboring the failed node results in a total of $11\times 4+26\times 2 = 96$ symbol, saving $7$ transmissions at each of the two neighbors. Note that the lower bound from Lemma~\ref{lemma:bound} says that the minimum transmission bandwidth is at least $11 \times 4+ 22 \times 2 = 88$ symbols (in this case Cor.~\ref{cor:max} gives a weaker result).}
\end{example}

%
%
	
	\section{IP repair for other code families}
\subsection{Determinant codes}\label{sec:det} Determinant codes \cite{Elyasi2016,Elyasi2019} represent another well-known family of intermediate-point regenerating codes. Of several versions of the construction presented by the authors, we follow the one 
appearing in \cite{Elyasi2019}. To remind ourselves of the general context, let $n,k$ be fixed, and let $d=k.$ Recall that the tradeoff curve \eqref{eq:sbt} isolates
a polygon on the bandwidth-storage plane called the {\em exact repair region}. In particular, as shown in \cite{Elyasi2016}, for $d=k$ the exact repair region is a convex hull of $k$ points given by $l_m=\binom{k}{m}, \beta_m=\binom{k-1}{m-1}, M_m=m\binom{k+1}{m+1}$, for $m=1,2,\dots,k$ and these points are achieved by the determinant code construction. 
Moreover, the intermediate points of the bound \eqref{eq:sbt} can be achieved by space sharing. In this section we observe 
that determinant codes can be easily adapted to support the IP technique.

Let us begin with a brief description of the code construction (see the original paper \cite{Elyasi2019} for more details), noting that linearity 
of the codes is again at the root of this application. Fix some $m\in[k].$ The symbols of the data file $\cF$ are arranged in two matrices, denoted below by $V$ and $W$, of dimensions $l_m\times d$ and $l_{m+1}\times d$, respectively. The rows are of $V$ are indexed by the $m$-subsets of the set $[d]:=\{1,2,\dots,d\},$ the rows of $W$ are indexed by the $(m+1)$-subsets, and the columns of either matrix 
are indexed 
by the elements of $[d].$ Accordingly we label the data symbols with two subscripts $j$ and $A$, where $j\in[d]$ and $A\subset [d].$ 
Write these symbols as
   \begin{gather*}
   \cV = \{v_{A,j} \in F\mid A \subset [d], |A| = m, j \in A\}\\
   \cW = \{w_{S,j} \in F\mid S \subset [d], |S| = m+1, j \in S, \tau_S(j)\le m\},
   \end{gather*}
where $\tau_S(j) = |\{i \in S: i \le j\}|$, i.e., in $\cW$ we do not assign a data element to the largest index within each of the subsets $S$. Instead, the largest location within each $S$ is assigned the value that fulfills the parity check equation
  $$
   \sum_{j\in S}(-1)^{\tau_S(j)}w_{S,j}=0,
   $$
yielding a total of $\binom d{m+1}$ parity symbols.   
Now assign the data symbols (and in the case of $W$ also the parity symbols) to the corresponding places in the matrices $V$
and $W$, writing them in the locations indexed by the elements of the subsets, and fill the remaining empty places in the matrices
with zeros. 

In the next step the matrices $V$ and $W$ are used to construct an $l_m \times d$ data matrix $D$, whose rows are again indexed 
by the sets $A\subset [d]$ and columns by $[d]$, as follows
$$d_{A,j} = \begin{cases}
	v_{A,j} & \mbox{ if }j \in A\\
	w_{A\cup\{j\},j} & \mbox{ if }j \notin A
\end{cases}.
$$ 
Note that $|\cV|+|\cW|=M_m$ and $M_m+\binom d{m+1}=l_m d,$ the number of matrix elements in $D$.
Finally to obtain a codeword that corresponds to the data file, we multiply $D$ by a $d \times n$ matrix $\Phi$ such that each $k$-subset of
its rows has full rank over $F$, for instance a Vandermonde matrix. This yields an $l_m\times n$ codeword matrix $C$ over $F.$

Next we describe the node repair procedure suggested in \cite{Elyasi2019}. For a matrix $G$  denote its $i$th row by $G_{i,:}$ and $i$th column by $G_{:,i}$. Thus, the contents of the $i$th node (the $i$th coordinate of the codeword $C$) is given by
$C_{:,i}=D\Phi_{:,i}.$ Without loss of generality assume that node 1 has failed and nodes in the set $H=\{2,3,\dots, d+1\}$ are used as the helper nodes. Define the $l_{m-1}\times l_m$ matrix $R$, whose rows and columns are indexed by $(m-1)$- and $m$-subsets of $[d]$, as follows:
      $$ 
   R_{B,A} = \begin{cases}
	(-1)^{\tau_A(j)}\phi_{j,1} & \mbox{ if } \exists y \mbox{ s.t. }A=B \cup \{j\}\\
	0 & \mbox{otherwise},
      \end{cases}
      $$
where $\phi_{j,1}$ is an element of the matrix $\Phi.$      
 We note that the matrix $R$ depends only on the index of the failed node and can be pre-computed independently at each helper node.
To perform repair, the failed node downloads from helper node $i\in H$ the vector $R C_{:,i}$. The dimension of this vector is $l_{m-1}$, so on the face of it, the required size of the download exceeds the allotted repair bandwidth $\beta_m$. However, by \cite[Prop.~1]{Elyasi2020} the rank of the matrix $R$ is at most $\beta_m,$ so as many symbols suffice to communicate the vector $R C_{:,i}$ from the $i$th (helper) node to the failed node.

At the failed node, the vectors $R C_{:,i},i\in H$ are written as columns of a $l_{m-1} \times d$ matrix $T$. Since $C_{:,i}=D\Phi_{:,i},$ we can write $T=RD\Phi_{:,H}$, where $\Phi_{:,H}$ is the submatrix of $\Phi$ formed of the columns $2,3,\dots,d+1.$ By construction, $\Phi_{:,H}$ is
invertible, and the failed node can find the $l_{m-1}\times d$ matrix $R\, D$.
These elements suffice to recover the contents of the failed node as shown in the following lemma due to \cite{Elyasi2020}, Prop.~2. Since our modification of the repair procedure depends on this statement, we include a proof in the appendix.
\begin{lemma}\label{lemma:cA1}
For any $A \subset [d], |A| = m$, 
\begin{equation}\label{eq:det_rep}
C_{A,1} = \sum_{i \in A}(-1)^{\tau_A(i)}R_{A\setminus\{i\},:}D_{:,i},
\end{equation}
where $R_{A\setminus\{i\},:}$ is the row of $R$ with index $A\backslash\{i\}.$
Thus the contents of the failed node can be recovered from the matrix $RD.$
\end{lemma}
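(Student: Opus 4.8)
\emph{Proof plan.} The strategy is to expand both sides of \eqref{eq:det_rep} as linear combinations of the entries of the data matrix $D$ weighted by the first column $\Phi_{:,1}$ of $\Phi$, and then to match them using the parity-check equations that define $\cW$. Writing $\phi_{j,1}$ for the entries of $\Phi_{:,1}$, the left-hand side is simply $C_{A,1} = (D\Phi_{:,1})_A = \sum_{j=1}^d \phi_{j,1}D_{A,j}$, so it suffices to show the right-hand side equals this sum. To expand the right-hand side, I would use the definition of $R$: for each $i\in A$ the row $R_{A\setminus\{i\},:}$ is supported exactly on the $m$-subsets of the form $(A\setminus\{i\})\cup\{j\}$ with $j\notin A\setminus\{i\}$, where its entry is $(-1)^{\tau_{(A\setminus\{i\})\cup\{j\}}(j)}\phi_{j,1}$. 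Hence
\[
  R_{A\setminus\{i\},:}D_{:,i} \;=\; \sum_{j\notin A\setminus\{i\}} (-1)^{\tau_{(A\setminus\{i\})\cup\{j\}}(j)}\,\phi_{j,1}\, D_{(A\setminus\{i\})\cup\{j\},\,i}.
\]
I would then split the inner sum into the single term $j=i$ and the terms with $j\notin A$. For $j=i$ we have $(A\setminus\{i\})\cup\{j\}=A$, the two sign factors $(-1)^{\tau_A(i)}$ combine to $1$, and these terms contribute exactly $\sum_{i\in A}\phi_{i,1}D_{A,i}$. So everything reduces to showing that the remaining ``off-diagonal'' terms sum to $\sum_{j\notin A}\phi_{j,1}D_{A,j}$.

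For the off-diagonal part I would fix $j\notin A$ and set $S=A\cup\{j\}$, an $(m+1)$-subset. For $i\in A$ the set $(A\setminus\{i\})\cup\{j\}$ equals $S\setminus\{i\}$, which does not contain $i$, so by the definition of $D$ we get $D_{S\setminus\{i\},i}=w_{S,i}$. Thus the contribution of this fixed $j$ is $\phi_{j,1}\sum_{i\in A}(-1)^{\tau_A(i)+\tau_{S\setminus\{i\}}(j)}w_{S,i}$. The key elementary identity is
\[
  \tau_A(i)+\tau_{S\setminus\{i\}}(j)\;\equiv\;\tau_S(i)+\tau_S(j)+1 \pmod 2\qquad(i\in A),
\]
which follows from $\tau_A(i)=\tau_S(i)-\mathbbm{1}[j\le i]$, $\tau_{S\setminus\{i\}}(j)=\tau_S(j)-\mathbbm{1}[i\le j]$, and the observation that $i\ne j$ forces $\mathbbm{1}[i\le j]+\mathbbm{1}[j\le i]=1$. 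Substituting this, the fixed-$j$ contribution becomes $-(-1)^{\tau_S(j)}\phi_{j,1}\sum_{i\in A}(-1)^{\tau_S(i)}w_{S,i}$; and the parity-check relation $\sum_{l\in S}(-1)^{\tau_S(l)}w_{S,l}=0$, solved for the $l=j$ term (recall $S\setminus\{j\}=A$), gives $\sum_{i\in A}(-1)^{\tau_S(i)}w_{S,i}=-(-1)^{\tau_S(j)}w_{S,j}$. Hence the fixed-$j$ contribution equals $\phi_{j,1}w_{S,j}=\phi_{j,1}D_{A,j}$. Summing over $j\notin A$ and adding the diagonal part yields $\sum_{j=1}^d\phi_{j,1}D_{A,j}=C_{A,1}$, proving \eqref{eq:det_rep}. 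The last sentence of the lemma is then immediate: $R_{A\setminus\{i\},:}D_{:,i}$ is precisely the $(A\setminus\{i\},\,i)$ entry of the matrix $RD$, so \eqref{eq:det_rep} displays each symbol $C_{A,1}$ of the failed node as a fixed $\pm1$-combination of entries of $RD$, which the failed node has already reconstructed.

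The only real obstacle is the sign bookkeeping: keeping the three functions $\tau_A$, $\tau_{S\setminus\{i\}}$, $\tau_S$ straight, verifying the parity of their sum, and correctly tracking which entries of $D$ are genuine data symbols from $\cV$ versus (parity-completed) symbols of $\cW$, depending on whether the column index lies in the row-indexing subset. Once the $j=i$ versus $j\notin A$ split is made, no further idea is needed.
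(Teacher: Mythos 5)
Your proposal is correct and follows essentially the same route as the paper's appendix proof: expand the right-hand side via the definition of $R$, split into the $j=i$ terms (which yield $\sum_{i\in A}\phi_{i,1}D_{A,i}$) and the $j\notin A$ terms, apply the same sign identity $\tau_A(i)+\tau_{S\setminus\{i\}}(j)=\tau_S(i)+\tau_S(j)-1$, and close with the parity-check relation on $\cW$. The sign bookkeeping and the identification $D_{S\setminus\{i\},i}=w_{S,i}$ all check out, so there is nothing to add.
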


Note that $R_{A\setminus\{i\},:}D_{:,i}$ is an element in the product $R\cdot D,$ which is exactly the information available to the failed node.
The point that we wish to make is that the described repair procedure can be modified to support IP repair for determinant codes used on a graph. 
To formulate it, we need some notation. Let $\overline{C}_H = \begin{bmatrix} C_{:,2}^\intercal  C_{:,3}^\intercal  \ldots  C_{:,d+1}^\intercal \end{bmatrix}^\intercal$ be a $dl_m$-dimensional column vector obtained by 
concatenating columns $2,3,\dots,d+1$ of $C$. Define $l_m\times l_m$ matrices
$W^{(i)}, i=1,\dots,d,$ whose rows are indexed by $m$-subsets of $[d].$ For a given $m$-subset $A\subset[d]$
the $A$-th row of $W^{(i)}$ is defined as:
   $$ 
   W^{(i)}_{A,:} = \begin{cases}
     (-1)^{\tau_A(i)}R_{A\setminus\{i\},:} & \mbox{ if }i \in A\\
      {\bf 0} & \mbox{ otherwise }
         \end{cases}.
    $$

\begin{proposition}
The contents of the failed node can be found as
  \begin{equation}\label{eq:pip}
  C_{:,1} = U\overline{C}_H = \begin{bmatrix} U^{(1)}  U^{(2)}  \ldots  U^{(d)} \end{bmatrix}\overline{C}_H,
  \end{equation}
where $U$ is an $l_m \times dl_m$ matrix determined by the contents of the helper set $\Phi_{:,H}$.
\end{proposition}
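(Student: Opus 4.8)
The plan is to start from the explicit repair formula of Lemma~\ref{lemma:cA1} and rewrite it first as a matrix identity in the columns of the data matrix $D$, then eliminate $D$ using the invertibility of $\Phi_{:,H}$. First I would observe that the matrices $W^{(i)}$ are defined precisely so that, for every $m$-subset $A\subset[d]$, the $A$-th entry of $W^{(i)}D_{:,i}$ equals $(-1)^{\tau_A(i)}R_{A\setminus\{i\},:}D_{:,i}$ when $i\in A$ and is $0$ otherwise. Summing over $i\in[d]$ and invoking Lemma~\ref{lemma:cA1} entrywise then gives the clean identity
$$
C_{:,1}=\sum_{i=1}^{d}W^{(i)}D_{:,i},
$$
which expresses the failed coordinate as a fixed linear image of the columns of $D$; note that the $W^{(i)}$ depend only on $R$, hence only on the index of the failed node, and not on the file $\cF$.

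Next I would bring in the relation $C_{:,i}=D\Phi_{:,i}$, valid in particular for the helper indices $i\in H=\{2,\dots,d+1\}$. Collecting these columns into the $l_m\times d$ matrix $C_{:,H}$ with columns $C_{:,2},\dots,C_{:,d+1}$, we have $C_{:,H}=D\,\Phi_{:,H}$, and since $\Phi_{:,H}$ is invertible by construction, $D=C_{:,H}\,(\Phi_{:,H})^{-1}$. Writing $\psi^{(i)}$ for the $i$-th column of $(\Phi_{:,H})^{-1}$, a vector determined entirely by $\Phi_{:,H}$, this yields $D_{:,i}=C_{:,H}\psi^{(i)}=\sum_{j=1}^{d}\psi^{(i)}_{j}\,C_{:,j+1}$ for each $i\in[d]$.

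The key step is then to substitute this into the previous display and interchange the order of summation:
$$
C_{:,1}=\sum_{i=1}^{d}W^{(i)}\sum_{j=1}^{d}\psi^{(i)}_{j}\,C_{:,j+1}
=\sum_{j=1}^{d}\Big(\sum_{i=1}^{d}\psi^{(i)}_{j}\,W^{(i)}\Big)C_{:,j+1}.
$$
Setting $U^{(j)}:=\sum_{i=1}^{d}\psi^{(i)}_{j}W^{(i)}$, an $l_m\times l_m$ matrix, and $U:=\begin{bmatrix}U^{(1)} & U^{(2)} & \cdots & U^{(d)}\end{bmatrix}$, the right-hand side is exactly $U\overline{C}_H$ by the definition of $\overline{C}_H$ as the concatenation of $C_{:,2},\dots,C_{:,d+1}$, which is the claimed formula~\eqref{eq:pip}. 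Since each $U^{(j)}$ is a linear combination of the $W^{(i)}$ (fixed once the failed node is known) with coefficients read off from $(\Phi_{:,H})^{-1}$, the matrix $U$ depends only on the helper encoding columns $\Phi_{:,H}$ and can be precomputed, as claimed.

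I do not anticipate a genuine obstacle: the argument is a reorganization of Lemma~\ref{lemma:cA1} plus the invertibility of $\Phi_{:,H}$. The only point requiring care is the bookkeeping — verifying entrywise over the $m$-subsets $A$ that $\sum_i W^{(i)}D_{:,i}$ reproduces the sum in~\eqref{eq:det_rep}, and keeping straight the index shift between the columns $D_{:,i}$ indexed by $[d]$ and the helper columns $C_{:,2},\dots,C_{:,d+1}$.
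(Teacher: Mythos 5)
Your proposal is correct and follows essentially the same route as the paper: both start from Lemma~\ref{lemma:cA1} to write $C_{:,1}=\sum_{i}W^{(i)}D_{:,i}$ and then eliminate $D$ via $D=C_{:,H}\,\Phi_{:,H}^{-1}$. The only cosmetic difference is that the paper packages the column-wise substitution using the Kronecker-product matrix $\overline{\Phi}_H=(\Phi_{:,H}^{-1}\otimes I_{l_m})^\intercal$ and sets $U=W\overline{\Phi}_H$, whereas you write out the same bookkeeping as explicit sums $U^{(j)}=\sum_i\psi^{(i)}_jW^{(i)}$.
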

\begin{proof}

As before, let $C_{:,H}$ and $\Phi_{:,H}$ be the submatrices of the matrices $C$ and $\Phi$ with columns indexed by the set $H$, so
   $
   C_{:,H} = D\,\Phi_{:,H},
   $
or 
   $$
   D=C_{:,H}\,\Phi_{:,H}^{-1}.
   $$    
Similarly to $\overline{C}_H,$ let $\overline{D} = \begin{bmatrix}	D_{:,1}^\intercal  D_{:,2}^\intercal  \ldots  D_{:,d}^\intercal \end{bmatrix}^\intercal$ 
be the flattened matrix $D,$ written as a column vector of length $dl_m$.
 Let $\overline{\Phi}_H=(\Phi_{:,H}^{-1}\otimes I_{l_m})^\intercal$ be the $dl_m \times dl_m$ block matrix. 
Then
     \begin{equation*}
     \overline{D} = \overline{\Phi}_H\overline{C}_H.
     \end{equation*}
 Now, according to this relation and \eqref{eq:det_rep}, 
    $$
 C_{:,1} = W\overline{D} = U\overline{C}_H,
    $$
    where $W=\begin{bmatrix} W^{(1)}  W^{(2)}  \dots  W^{(d)} \end{bmatrix}$ and $U=\begin{bmatrix} W^{(1)}  W^{(2)}  \dots  W^{(d)} \end{bmatrix}\overline{\Phi}_H,$ proving \eqref{eq:pip}. Moreover, the matrix $U$ depends only on $\Phi_{:,H}$, and the proof is complete.
\end{proof}
As before, representation \eqref{eq:pip} supports ``pipeline'' repair of the contents of $C_{:,1}$, which can be spread across the nodes
of the helper set. Specifically,  instead of transmitting $|E|\beta_m$ symbols, any set $E$ of helper nodes can only transmit the vector
   $$
   \sum_{i\in E}U^{(i)}C_{:,i},
   $$ 
which requires sending a total of $l_m$ symbols over the edges leaving $E$ along the shortest path toward the failed node. 
Hence whenever $|E| > \frac{l_m}{\beta_m}$, this procedure accounts for savings in the repair bandwidth over simple forwarding (the AF repair).

\begin{example}\label{example:determinent} 
	{\rm Going back to our running example in Fig.~\ref{fig:graph}, choose $m=3,$ then the code parameters are $n=7,k=d=6,M=105$, and
we obtain an interior-point code operating at the point $(l_m=20,\beta_m=10)$ of the trade-off curve.
As before, suppose our goal is to repair the root node, while all the remaining 6 nodes serve as helpers.
The AF repair procedure would require transmission of $4\cdot10+2\cdot30=100$ symbols while performing 
intermediate processing at the nodes adjacent to the root node results in a total of $4\cdot10+2\cdot20=80$ symbol transmissions, saving $10$ 
transmissions at each of the two nodes. In this case the bounds in Lemma~\ref{lemma:bound} and Cor.~\ref{cor:max} both suggest that each of the two nodes closest to the root should send at least 10 symbols, resulting in a gap of 20 to the IP construction.}
\end{example}

\subsection{Cascade codes}\label{sec:cascade}
 A family of regenerating codes based on determinant codes was introduced in \cite{Elyasi2020}. For an
integer parameter $\mu, 1\le\mu\le k$ the parameters
of {\em cascade codes} are
\begin{equation*}
	\begin{array}{@{\hspace{-.2in}}c}	{l = \sum_{m=0}^\mu (d-k)^{\mu-m}\binom{k}{\mu}} \\[.1in]
		{\beta = \sum_{m=0}^\mu (d-k)^{\mu-m}\binom{k-1}{m-1}} \\[.1in]
		{M = \sum_{m=0}^\mu k(d-k)^{\mu-m} \binom{k}{m} - \binom k{\mu+1}. }
		\end{array}
	\end{equation*}
We note that the parameters of this code family coincide with the parameters of Moulin codes \eqref{eq:M}, as can be seen by setting
$s-1=\mu$ in \eqref{eq:M} (the families themselves are different; see \cite{Duursma2021} for more on this). Also, setting $\mu=1$ or $\mu=k$
yields the MSR and MBR points of the tradeoff curve \eqref{eq:sbt}, respectively, while otherwise the codes operate at interior points. Finally,
setting $d=k$ recovers the parameters of determinant codes. 

A cascade code is formed by stacking together several determinant codes with different values of the parameter $\mu$, called the mode 
of the component codes. The encoder mapping is again linear and therefore can be accomplished by multiplying a data matrix by the encoder
matrix. Cascading together determinant codes of varying modes enables the authors of \cite{Elyasi2020} to obtain codes for all values
$d\ge k$ as opposed to $d=k$ in the previous section. To provide the functionality of data recovery from any $k$ nodes, the data symbols 
are encoded into several copies of the constituent determinant codes via the process called symbol injection. The details of the construction
are too involved to be presented here, and we refer the readers to the original paper. An important point for us is that repair of the
failed node is performed by concatenating the repair data obtained independently from the constituent determinant codes. Since each of them
supports the IP processing, the overall construction can be also placed on the vertices of the graph to be repaired by combining
the fragments computed by the helpers along the path from them to the failed node in the repair tree.

\section{Data retrieval for codes on graphs}\label{sec:retrieval}
In Part I of this paper, \cite{Patra2021}, we have only focused on the node repair aspect of regenerating codes on graphs, circumventing the more frequently occurring task of data retrieval. The reason behind this is that at the MSR point, which was the main focus of \cite{Patra2021}, the task of data retrieval from a regenerating code defined on an arbitrary graph becomes trivial. Since MSR codes are Maximum Distance Separable (MDS) by definition, any set $A$ of $k$ or fewer nodes has to transmit $|A|\cdot l$ symbols and there is no hope of compressing this any further. This implies that in the restricted connectivity setting, when the Data Collector (DC) does not have direct access to $k$ nodes, standard relaying of data is optimal. The situation changes when we lift the MSR constraint.
 
\begin{example}
{\rm The following example shows that for MBR code families the task of data recovery can be accomplished by downloading 
fewer than $kl$ symbols from the chosen subset of $k$ nodes. Consider the family of {\em polygonal codes} \cite{Shah2012}, which
closely follows the definition of MBR codes. The parameters of the family are $n,k,d=n-1, l=n-1, \beta=1, M=k(n-1)-k(k-1)/2.$
To construct the code, fix $n$ and $k$ and choose an MDS code of length $N=n(n-1)/2$ and dimension $M$ over a field $F_q$ of size $q\ge N.$ The encoding mapping of the polygonal code $\cC_n$ is formed of two steps. In the first step we encode $M$ symbols of the file $\cF$ into a codeword of the MDS code. The length $N$ is chosen
to support a bijection between the coordinates of the codeword and edges of a complete graph $K_n,$ so we place each encoded symbol
on an edge of the graph. Each vertex of $K_n$ models a storage node. To complete the data placement in the system, we assign to each 
node the symbols written on the edges incident to it. Thereby, every node carries $n-1$ symbols of the encoding, which 
matches the parameters of the code $\cC_n.$ 

To reconstruct the file $\cF$, the DC accesses an arbitrary subset $K$ of $k$ nodes of the graph, which in total contain $k(n-1)$ symbols of the codeword. Since each pair of nodes shares one common symbol, the DC downloads $l,l-1,\dots, 1$ symbols from 
the nodes in $K$ (taken in some fixed order). This yields a total of $M$ stored symbols, so the DC is able to 
recover the MDS codeword and therefore also the file. Note a saving of $\binom k2$ symbols compared to downloading the entire contents of the $k$ nodes.}
\end{example}

In this section we elaborate on this example in two ways. First, in Lemma~\ref{lemma:bound2} below  we derive a lower bound on the number of symbols required to complete the data collection task. The bound applies to all sets of parameters on the storage-bandwidth tradeoff curve \eqref{eq:sbt}, with a caveat that for the intermediate points, we have to allow codes with functional rather than exact repair.  At the MBR point the bound 
is attainable, as shown by the above example as well as by another example that we consider in this section, namely the MBR Product-Matrix codes.

\subsection{Lower bound for the data retrieval bandwidth}\label{sec:bound_retrieval}
For non-MSR regenerating codes, the possibility of reducing the number of downloaded field symbols motivates us to seek a 
lower bound on the communication complexity. Let us formally introduce our model. Like before, an $[n,k,d,l,\beta,M]$ regenerating code is 
defined on a connected graph $G=(V,E)$. A set of $k$ nodes, denoted by $K$, wish to send their data to the DC for the purpose
of recovering the original file of size $M$. We assume DC to be an external node (if DC is a node in the graph $G$ itself then it needs to 
contact $k-1$ other nodes but the analysis remains the same.). To formalize this model, suppose that DC has direct access only to a subset $\bar{K}\subset K$ with $|\bar{K}| 
<k$ and let $G_{\bar{K},K}$ be the graph with $V_{\bar{K},K} = \{DC\}\cup K$ and $E_{\bar{K},K} = \{(DC,v): v \in \bar{K}\} \cup \{(u,v):u,v \in K, (u,v)\in E\}$. We will
assume that this graph is connected and all communication for the data retrieval process will be done on this new graph $G_{\bar{K},K}$.

\begin{lemma}\label{lemma:bound2}
	For an $[n,k,d,l,\beta,M]$ regenerating code and any set $A \subseteq K$ of size $a$, let $R_A$ be the data derived as a function $W_A$ 
	such that $H(\cF|R_A, W_{K\setminus A})=0$. Then
	\begin{equation}\label{eq:RA}
		H(R_A) \ge \sum_{i=k-a}^{k-1}\min\{l,(d-i)\beta\}.
	\end{equation}
\end{lemma}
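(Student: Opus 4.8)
The plan is to mimic the entropy-counting argument used in the proof of Lemma~\ref{lemma:bound}, but now applied to the data-retrieval structure rather than the node-repair structure. The starting point is the defining property $H(\cF\mid R_A, W_{K\setminus A})=0$, which says that $R_A$ together with the contents of the $k-a$ nodes outside $A$ determines the whole file. Since $H(\cF\mid W_B)=0$ for every $k$-subset $B$, and since $K$ itself has size $k$, we have $H(\cF\mid W_K)=0$ and in fact $H(W_K)\ge H(\cF)=M$ when $M$ is taken at its maximal value on the tradeoff curve; combined with the retrieval hypothesis this will give $H(R_A, W_{K\setminus A})\ge M$. Then I would bound $H(R_A)\ge M - H(W_{K\setminus A})$ by subadditivity, exactly as in Lemma~\ref{lemma:bound}.

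The remaining work is to upper-bound $H(W_{K\setminus A})$, where $|K\setminus A|=k-a$. First I would enumerate the nodes of $K\setminus A$ as $j_1,\dots,j_{k-a}$ and write, by the chain rule,
\[
H(W_{K\setminus A}) = \sum_{t=1}^{k-a} H\big(W_{j_t}\,\big|\,W_{j_1},\dots,W_{j_{t-1}}\big).
\]
Each conditional entropy $H(W_{j_t}\mid W_{j_1},\dots,W_{j_{t-1}})$ is of the form $H(W_i\mid W_{A'})$ with $|A'|=t-1\le k-a-1<d$ and $i\notin A'$, so Lemma~\ref{lemma:claim1} applies and bounds it by $\min\{l,(d-(t-1))\beta\}$. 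Summing over $t=1,\dots,k-a$ and re-indexing with $i=t-1$ (so $i$ runs from $0$ to $k-a-1$) gives $H(W_{K\setminus A})\le \sum_{i=0}^{k-a-1}\min\{l,(d-i)\beta\}$. Substituting this and the maximal file size $M=\sum_{i=0}^{k-1}\min\{l,(d-i)\beta\}$ (the form of \eqref{eq:sbt} shifted to start the index at $0$, matching the $(d-i)\beta$ in the statement) into $H(R_A)\ge M - H(W_{K\setminus A})$, the first $k-a$ terms cancel and leave exactly $\sum_{i=k-a}^{k-1}\min\{l,(d-i)\beta\}$, which is \eqref{eq:RA}.

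The one point that needs care — and which I expect to be the main obstacle — is the legitimacy of plugging in the maximal value of $M$. In Lemma~\ref{lemma:bound} the authors argue that the inequality holds for all $M$ and hence in particular for the largest admissible $M$; here the caveat noted before the lemma statement (that for interior points one must allow functional rather than exact repair) is precisely what guarantees that the bound \eqref{eq:sbt} is met with equality, so that the cancellation is clean. I would therefore state explicitly that $M$ is taken equal to the right-hand side of \eqref{eq:sbt}, note that Lemma~\ref{lemma:claim1} only requires a valid regenerating code (not exact repair), and observe that the indexing convention $(d-i)\beta$ with $i$ starting from $0$ — rather than $(d-i+1)\beta$ with $i$ from $1$ — is just a relabeling of \eqref{eq:sbt}. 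Everything else is routine application of subadditivity, the chain rule, and Lemma~\ref{lemma:claim1}, in direct parallel with the proof of Lemma~\ref{lemma:bound}.
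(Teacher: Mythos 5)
Your proposal is correct and follows essentially the same route as the paper's proof: lower-bound $H(R_A, W_{K\setminus A})$ by the maximal file size from \eqref{eq:sbt}, subtract $H(W_{K\setminus A})$ via subadditivity, and control $H(W_{K\setminus A})$ by the chain rule together with Lemma~\ref{lemma:claim1}, so the first $k-a$ terms cancel. Your explicit remark about the legitimacy of substituting the maximal $M$ matches the caveat the paper places before the lemma (allowing functional repair at interior points), so nothing is missing.
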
   
\begin{proof}
	From Lemma~\ref{lemma:claim1}, we know that for any set $B = \{b_1,b_2,\dots,b_{|B|}\} \subset K$, 
	$$H(W_B) = \sum_{i=1}^{|B|}H(W_{b_i}|W_{b_{i-1}},\dots,W_{b_1}) \le \sum_{i=1}^{|B|}\min \{l,(d-i+1)\beta\} = \sum_{i=0}^{|B|-1}\min \{l,(d-i)\beta\}.$$ 
	From the data retrieval property of the code, we have
	$$
		H(R_A, W_{K\setminus A}) \ge \sum_{i=0}^{k-1}\min\{l,(d-i)\beta\}, 
	$$
which implies
		\begin{align*}
		 H(R_A) &\ge \sum_{i=0}^{k-1}\min\{l,(d-i)\beta\}- H(W_{K\setminus A})\\
		&\ge \sum_{i=0}^{k-1}\min\{l,(d-i)\beta\}-\sum_{i=0}^{k-a-1}\min\{l,(d-i)\beta\}\\
		&= \sum_{i=k-a}^{k-1}\min\{l,(d-i)\beta\}.
	\end{align*}
\end{proof}

Specializing bound \eqref{eq:RA} for the MSR and MBR points, we obtain
\begin{corollary}\label{cor:MBR}
(1) For an $[n,k,d,l,\beta,M]$ MSR code and any subset $A \subseteq K$
  \begin{equation*}
  	H(W_A) \ge |A|\cdot l.
 \end{equation*}
(2)
	For an $[n,k,d,l,\beta,M]$ MBR code and any subset $A \subseteq K$  of size $a$
	\begin{equation}\label{eq:MBR}
		H(W_A) \ge \sum_{i=k-a}^{k-1}(d-i)\beta.
	\end{equation}	
\end{corollary} 
\noindent\textit{Remark:} Part (1) of this corollary gives a formal proof of our earlier claim as to why standard relaying is optimal for data retrieval with
MSR codes. 

\noindent\textit{Remark:} Note that at the MBR point, even for the fully connected setting when the DC has direct access to $k$ nodes, 
data retrieval is performed by downloading full contents of the $k$ nodes. At the same time, Cor.\ref{cor:MBR}(2) shows that it might be possible to retrieve the file by downloading fewer symbols. In the next section we show that this is indeed the case and 
that bound \eqref{eq:MBR} is achievable with PM MBR codes; thus this bound is in fact tight.

\subsection{Data retrieval with optimal communication}\label{sec:achieve_retrieval}

In this section we describe a data collection procedure on a graph with communication complexity attaining the bound \eqref{eq:MBR}, 
using the PM code family as an example. Let us first recall the standard PM MBR construction of Rashmi et al. \cite{Rashmi11}. The parameters of the codes are $[n,k,d,l=d,\beta=1,M=kd-\binom{k}{2}],$ see also \eqref{eq:MSR-MBR}. The data file $\cF$ is formed of $M$ symbols of the field $F$, and it is 
represented by a $d\times d$ matrix $B$ that has the following structure:
$$
B = \begin{bmatrix}
	S & T \\
	T^{\intercal} & 0
\end{bmatrix}.
$$
Here $S$ is a $k \times k$ symmetric matrix and $T$ is a $k \times (d-k)$ matrix. Together these two matrices contain $\binom{k}{2}+k(d-k) = dk - \binom{k}{2}$ message symbols. To encode the message, choose $n$ distinct nonzero elements $x_1,\dots,x_n$ of $F$ and use them to construct an $n\times d$ Vandermonde matrix $\Psi$ with each row formed of consecutive powers of one of the $x_i$'s. The $n \times d$ codeword matrix is 
found as
   $$
   C = \Psi B.
   $$

The data retrieval proceeds as follows. Assume that the DC aims at retrieving $\cF$ by accessing 
the stored contents of nodes $C_1,\dots,C_k$ (or any other $k$-tuple of the nodes). Denote by $\Psi_k$ the submatrix of $\Psi$ formed by the
first $k$ rows of $\Psi,$ and write it as $\Psi_k=[\Psi_{k,1}|\Psi_{k,2}],$ where $\Psi_{k,1}$ is a $k\times k$ Vandermonde matrix.
Upon retrieving the information from the nodes $C_1,\dots,C_k$, the DC 
has access to the $k\times d$  matrix
   \begin{equation}\label{eq:P}
    [
\Psi_{k,1}S+\Psi_{k,2}T^{\intercal}\mid\Psi_{k,1}T
    ],
   \end{equation}
where the left submatrix has $k$ and the right $d-k$ columns. Since $\Psi_{k,1}$ is invertible, from the right submatrix the DC can find
the matrix $T$. Once found, it gives access to the product $\Psi_{k,1}S$ and then to $S$, completing the decoding (data recovery) process.

Inspired by the polynomial description of PM MSR codes in \cite{Duursma2020}, we now present a similar description of the PM MBR codes and show how this achieves the bound \eqref{eq:MBR}. Since $B$ is a symmetric matrix, we can associate its elements with the 
coefficients of a symmetric polynomial $s(y,z)$ such that
   $$
   \min(\deg_y(s),\deg_z(s))\le k-1 \text{ and } \max(\deg_y(s),\deg_z(s))\le d-1.
   $$
Next, we let node $i$ store the $d$ coefficients of the polynomial $g_i(z) = s(x_i,z),$ where $x_i\in F$ is one of the elements chosen above. 
Altogether this forms an equivalent description of the encoding procedure of the code. 
It is clear that retrieving the coefficients of any $k$ of these polynomials results in the retrieval of the file: for instance, the coefficients
of $g_i,i=1,\dots,k$ exactly correspond to the rows of the matrix \eqref{eq:P}.

The standard data retrieval scheme described above suggests acquiring all the coefficients of $k$ polynomials. We observe that this is in fact
not necessary because the file can be recovered by accessing exactly $M$ elements of the codeword.
Without loss of generality, assume that the nodes $1,2,\dots, k$ are contacted for data retrieval. Node $i$ returns the $d-i+1$ symbols $\{g_{i}(x_{j}):j=i,i+1,\dots,d\}$. To see that this scheme achieves the bound \eqref{eq:RA},  without loss of generality let $A=\{1,2,\dots,a\}$. Then the total data transmitted by the set $A$ is $\sum_{i \in A}|\{g_{i}(x_{j}):j=i,i+1,\dots,d\}| = \sum_{i=k-a+1}^k(d-i+1)$ which matches the bound. The correctness of the scheme follows from the next lemma.
\begin{lemma}
The set of symbols $\{g_{i}(x_{j}):j=i,i+1,\dots,d,i=1,2,\dots,k\}$ are sufficient to recover the original $M$ symbols.
\end{lemma}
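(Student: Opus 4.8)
The plan is to reconstruct the symmetric polynomial $s(y,z)$ from the given evaluations, exploiting its bounded bidegree. Recall that $s(y,z)$ has the form
$$
s(y,z) = \sum_{\substack{0 \le a \le d-1 \\ 0 \le b \le d-1}} s_{a,b}\, y^a z^b,
$$
where $s_{a,b} = s_{b,a}$ (symmetry) and, crucially, $s_{a,b} = 0$ whenever both $a \ge k$ and $b \ge k$ (this is the $0$ block of $B$). So the unknown coefficients are exactly the $M$ symbols of $\cF$. First I would recast the data held by node $i$: the symbols $\{g_i(x_j) : j = i, i+1, \dots, d\}$ are the values $\{s(x_i, x_j) : j \ge i\}$, and since $s$ is symmetric, the full collection $\{s(x_i,x_j) : 1 \le i \le j \le d, \ i \le k\}$ is available (for a pair $(i,j)$ with $i \le k < j$ the value $s(x_i,x_j)$ comes from node $i$; for $i \le j \le k$ it comes from node $\min(i,j)$ after using symmetry). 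Counting: this is $\sum_{i=1}^k (d - i + 1) = M$ evaluations, matching the dimension of the unknown space, so one expects a unique solution provided the corresponding linear system is nonsingular.

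The key step is to prove nonsingularity. I would proceed in the order the polynomial is ``filled in'' by the standard decoder, mirroring the matrix argument around \eqref{eq:P}. Write $s(y,z) = \sum_{a=0}^{d-1} p_a(z)\, y^a$ with $\deg p_a \le d - 1$ for $a < k$ and $\deg p_a \le k-1$ for $a \ge k$. Step one: recover the ``tail'' polynomials $p_k, p_{k+1}, \dots, p_{d-1}$ (these are the columns of $T$ transposed). Evaluating at $y = x_i$ for $i = 1, \dots, k$, the vector $(s(x_i, z))_{i=1}^k$ as polynomials in $z$ is $V_k \cdot (p_0, \dots, p_{d-1})^\intercal$ where $V_k$ is the $k \times d$ Vandermonde-type matrix with rows $(1, x_i, \dots, x_i^{d-1})$; its first $k$ columns form an invertible $k \times k$ Vandermonde block $\Psi_{k,1}$. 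The point is that we do NOT know $s(x_i, z)$ fully, only its values at $z = x_i, \dots, x_d$; but each $p_a$ with $a \ge k$ has degree $\le k - 1$, so it is determined by $k$ evaluations. I would argue that the ``diagonal-and-above'' evaluation pattern still suffices: concretely, one shows that the map sending the $M$ coefficients to the $M$ evaluations is triangular with respect to a suitable ordering, or directly that its matrix is a product/combination of Vandermonde-type blocks with nonzero determinant because the $x_i$ are distinct and nonzero.

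The main obstacle will be handling the interaction between the symmetry constraint and the non-rectangular (staircase) evaluation pattern: unlike the standard decoder, which downloads full rows and uses a clean block-triangular argument, here each node $i$ withholds its first $i - 1$ evaluations, so the linear system is not obviously in block-triangular form. The cleanest route I can see is an inductive peeling argument: show that from the evaluations $\{s(x_i, x_j)\}_{j \ge i}$ one can recover $g_k(z) = s(x_k, z)$ first (node $k$ gives $d - k + 1$ values of a degree-$\le d-1$ polynomial that is forced to have its top $k - 1$ coefficients... wait, that is not immediate), so more likely the induction should go in the direction of reconstructing, for $i = 1, 2, \dots, k$, the univariate polynomial $g_i(z) = s(x_i, z)$ one at a time. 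Once $g_1, \dots, g_{i-1}$ are known, the values $s(x_i, x_1), \dots, s(x_i, x_{i-1})$ are known by symmetry ($= g_j(x_i)$ for $j < i$), and together with the $d - i + 1$ transmitted values $s(x_i, x_j)$, $j \ge i$, this gives all $d$ evaluations $g_i(x_1), \dots, g_i(x_d)$, from which $g_i$ (degree $\le d - 1$) is recovered by Lagrange interpolation. After obtaining $g_1, \dots, g_k$ one has the matrix \eqref{eq:P} and finishes exactly as in the standard decoder. I would present this induction as the proof, checking the base case $i = 1$ (node $1$ transmits all $d$ values $g_1(x_1), \dots, g_1(x_d)$) and noting that the bidegree bound on $s$ is what guarantees the recovered bivariate object is consistent and that the final file-recovery step succeeds.
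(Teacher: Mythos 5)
Your final inductive argument is exactly the paper's proof: recover $g_1$ from its $d$ transmitted evaluations, then for each $i\ge 2$ fill in the missing values $g_i(x_j)=s(x_i,x_j)=s(x_j,x_i)=g_j(x_i)$, $j<i$, from the already-recovered polynomials, interpolate the degree-$(d-1)$ polynomial $g_i$, and finish with the standard decoder applied to the matrix \eqref{eq:P}. The preliminary detour through nonsingularity of the staircase linear system is unnecessary but harmless; the proof as ultimately stated is correct and matches the paper.
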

\begin{proof}
	We have the following set of evaluations of the symmetric polynomial $s(y,z)$ at the points $\{(x_i,x_j):j=i,i+1,\dots,d,i=1,2,\dots,k\}$. From node 1, the DC gets $d$ evaluations of the polynomial $g_1(z)$ of degree $d-1,$ sufficient to find its
coefficients. From node 2, the DC obtains $\{g_2(x_i): i=2,\dots,d\};$ since it already knows $g_2(x_1) = s(x_2,x_1) = s(x_1,x_2) = g_1(x_2),$
altogether it has therefore access to $d$ evaluations of $g_2(z)$ which again suffice to recover its coefficients. By induction, 
if the DC has recovered the coefficients of all $g_m(z)$ for $m\le i$ for some $i<k$, then after acquiring  further
$d-i$ symbols from node $i+1$, it will have access to $d$ evaluations of $g_{i+1}(z)$. This process results in
recovery of all the polynomials $\{g_i(z): 1 \le i \le k\}$, and this completes the data retrieval.
\end{proof}
\begin{example}
	{\rm Consider again the graph from Example \ref{example:pm} but this time assume that codewords of a PM MBR code with parameters
		$[n=7,k=5, d=6, l=6,\beta=1, M=20]$ are placed on the nodes. Assume that DC has direct access only to node 1, i.e., $\bar{K} = \{1\}$. 
		By the breadth-first search algorithm, the 5 nodes taking part in the data retrieval process are chosen to be nodes $1,2,3,4$ and $5$. Figure \ref{fig:pm_mbr1} shows the data transmission required in the traditional setting where each node sends its $l=6$ symbols to the DC for retrieval of the file. Applying the Corollary \ref{cor:MBR}(2), we have $H(R_{\{4\}}) \ge 2, H(R_{\{4,5\}}) \ge 5$ and so on. Figure \ref{fig:pm_mbr2} shows the optimal data transmission matching the bound \eqref{eq:MBR} under the same connectivity constraints. Examining the results, we see that optimizing the communication results in moving $27$ fewer symbols. }
	\begin{figure}[ht]	\begin{center}\scalebox{0.7}
			{
				\subfigure[Traditional data retrieval]{
					\begin{tikzpicture}[roundnode/.style={circle, draw=black},
						rootnode/.style={circle, draw=red, very thick, minimum size=1mm, fill=red}			]
						\node[roundnode] (8) at (0,2) {$DC$};
						\node[roundnode] (1) at (0,0) {1};
						\node[roundnode] (2) at (-1.5,-2) {2};
						\node[roundnode] (3) at (1.5,-2) {3};
						\node[roundnode] (4) at (-3,-4) {4};
						\node[roundnode] (5) at (-1,-4) {5};
						\node[roundnode] (6) at (1,-4) {6};
						\node[roundnode] (7) at (3,-4) {7};
						\path[<-] (1) edge [blue,thick] node[style={anchor=east}] {18} (2);
						\path[<-] (1) edge [blue,thick] node[style={anchor=east}] {6}(3);
						\path[<-] (2) edge [blue,thick] node[style={anchor=east}] {6}(4);
						\path[<-] (2) edge [blue,thick] node[style={anchor=east}] {6}(5);
						\path[-] (3) edge [thick] (6);
						\path[-] (3) edge [thick] (7);
						\path[->] (1) edge [blue,thick] node[style={anchor=east}] {30}(8);
					\end{tikzpicture}\label{fig:pm_mbr1}
				}
				\hspace{0.3in}
				\subfigure[Optimal data retrieval]{
					\begin{tikzpicture}[roundnode/.style={circle, draw=black},
						rootnode/.style={circle, draw=red, very thick, minimum size=1mm, fill=red}			]
						\node[roundnode] (8) at (0,2) {$DC$};
						\node[roundnode] (1) at (0,0) {1};
						\node[roundnode] (2) at (-1.5,-2) {2};
						\node[roundnode] (3) at (1.5,-2) {3};
						\node[roundnode] (4) at (-3,-4) {4};
						\node[roundnode] (5) at (-1,-4) {5};
						\node[roundnode] (6) at (1,-4) {6};
						\node[roundnode] (7) at (3,-4) {7};
						\path[<-] (1) edge [blue,thick] node[style={anchor=east}] {9} (2);
						\path[<-] (1) edge [blue,thick] node[style={anchor=east}] {5}(3);
						\path[<-] (2) edge [blue,thick] node[style={anchor=east}] {2}(4);
						\path[<-] (2) edge [blue,thick] node[style={anchor=east}] {3}(5);
						\path[-] (3) edge [thick] (6);
						\path[-] (3) edge [thick] (7);
						\path[->] (1) edge [blue,thick] node[style={anchor=east}] {20}(8);
					\end{tikzpicture}\label{fig:pm_mbr2}
				}
			}
		\end{center}
		\caption{Traditional vs optimal transmission for data retrieval for PM MBR code. In part (b) the graph $G_{\bar K,K}$ is formed of vertices 1 through 5, where ${\bar K}=\{1\}, K=\{2,3,4,5\}$. }\label{fig:pm_mbr}
	\end{figure}
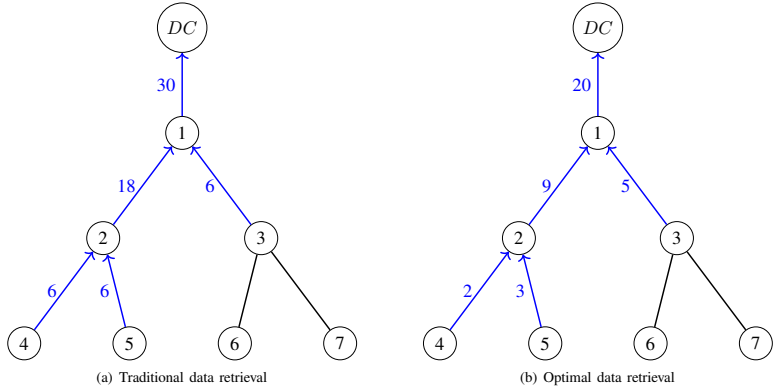
\end{example}
As noted above, even in the full connectivity setting when DC has direct access to every node, to perform data retrieval it suffices to download only $M$ symbols, not $kl>M$ symbols as proposed in the original work \cite{Rashmi11}. We believe in fact that this is a general
phenomenon that applies for all $F$-linear families of MBR codes.

%
%

\section{Error Correction during Repair}\label{sec:errors}
The tasks of node repair and data retrieval involve transmission of data across potentially unreliable links across the network. Hence the question of error control naturally arises. Additionally, there can also be adversarial nodes in the network that try to hamper the process or introduce errors in the outcomes of the process. In the repair scenario, this can lead to repaired nodes whose information is corrupted and
it can spread through the network if this faulty node takes part in further repairs as a helper. Previous works on error control during repair (\cite{RSRK2012}, \cite{Ye16a}, \cite{Silberstein2015} and others\footnote{Errors in node repair can be also framed as networks with an ``active adversary,'' and then error resilience is tied to ``secrecy capacity'' of the network \cite[Sec.9.4]{Ramkumar2022}.}) all focus on the traditional model of direct connectivity. These schemes still work in the graph scenario if the nodes rely upon standard relaying of data, i.e., the AF strategy. At the same time, if the nodes perform intermediate processing, error amplification can happen, similar to what happens in network coding with errors. This is because even a single corrupted symbol can potentially affect all the linear combinations 
evaluated at the node. Additionally, if a node is adversarial, it can also corrupt all the data it forwards to the next node in the network, even if it received correct data from all of its predecessors. 

As argued below, the case of noisy edges can be handled in a straightforward manner. If each of the edges does not introduce more than a fixed number of errors, one can simply encode each transmission using an error correcting code with sufficient minimum distance. This decoder can be installed on every node, checking the incoming information for errors and correcting them before the processing. Once the node evaluates the data passed to the next node during the repair task, it is again encoded and forwarded along the path. This local error correction
precludes error amplification during the transmission. This method faces two limitations. First, it requires the nodes to be furnished with
processing power (which is assumed to be there anyway because the nodes transmit some function of their contents), and secondly, it
assumes that the nodes themselves do not interfere with the data processing by altering it, i.e., they do not turn adversarial. 

Below we suppose that edges in the graph are noisy, assuming that each edge can introduce at most a $\rho$ proportion of errors in the data
passed over it, i.e., a vector of $N$ symbols transmitted over the edge is received by the parent node with at most $\rho N$ incorrect field symbols.

\begin{proposition}
Suppose that the error rate on the edges is bounded above by $\rho\in(0,1).$ It is possible to perform the repair process successfully using the IP technique with an additional overall transmission overhead of $(1-2\rho)^{-1}$ compared to the zero error model.
\end{proposition}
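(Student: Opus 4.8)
The plan is to reduce the noisy-edge repair problem to a sequence of point-to-point coding problems along the edges of the repair tree $T_f$, and to argue that the rate loss on each edge is multiplicative by the factor $(1-2\rho)^{-1}$, which then accumulates to the same factor for the whole transmission. First I would recall that in the IP scheme each edge $e$ of $T_f$ carries a fixed number $N_e$ of field symbols: a node collects the (processed) data arriving on its incoming edges, computes the prescribed linear combination, and sends the result on its outgoing edge. The key structural observation is that this computation at an intermediate node requires the node to know the \emph{exact} symbols it received; if an incoming edge corrupts up to a $\rho$ fraction of those symbols, the computed outgoing data is garbage and the error propagates. So the fix is to protect \emph{each} edge transmission separately with an error-correcting code that the receiving node decodes before doing any processing.

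The main computation is just the Singleton-type bound on the overhead: to protect a message of length $N_e$ against a $\rho$ fraction of errors on a block of length $N_e' \ge N_e$ symbols, a code correcting $\lfloor \rho N_e' \rfloor$ errors needs minimum distance $> 2\rho N_e'$, so by the Singleton bound its rate satisfies $N_e/N_e' \le 1 - 2\rho N_e'/N_e' + 1/N_e' $, i.e. essentially $N_e' \ge N_e/(1-2\rho)$ (one can use an MDS code such as a Reed--Solomon code over a large enough field to meet this with equality up to lower-order terms, or simply state the bound with a $(1+o(1))$ factor). Hence the number of symbols actually pushed over edge $e$ is $N_e' = \lceil N_e/(1-2\rho)\rceil$. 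Summing over all edges of $T_f$, the total transmission becomes $\sum_e N_e' \le (1-2\rho)^{-1}\sum_e N_e + |E(T_f)|$, which is $(1-2\rho)^{-1}$ times the zero-error transmission up to a negligible additive term from the ceilings; absorbing that into the statement gives the claimed overhead factor $(1-2\rho)^{-1}$.

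Then I would spell out the repair itself: process the tree from the leaves toward $v_f$. A leaf helper encodes its $\beta$ repair symbols with the chosen MDS code and transmits $\beta/(1-2\rho)$ symbols; its parent decodes (correcting the $\le \rho$ fraction of errors introduced on that edge, recovering the $\beta$ symbols exactly), applies the IP linear combination prescribed by whichever code family is in use (PM, generalized PM, Moulin, determinant, cascade — in every case the combination is $F$-linear and needs exact inputs), re-encodes the resulting block, and forwards it. Inductively, every node recovers the \emph{exact} processed data it would have held in the noiseless model, so correctness of the original IP scheme carries over verbatim, and in particular $v_f$ reconstructs its contents. The one thing to note is that this argument needs the decoding at each node to be unambiguous, which is why we insist on minimum distance strictly greater than $2\rho N_e'$ and on $\rho<1/2$ (for $\rho\ge 1/2$ no bounded-distance scheme of positive rate works, consistent with the factor $(1-2\rho)^{-1}$ blowing up).

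The main obstacle, such as it is, is not technical depth but bookkeeping: one must make sure that the multiplicative overhead does not compound multiplicatively across the depth of the tree. It does not — because decoding is exact at every hop, each edge sees a clean message of its designated length and the blow-up is applied once per edge, not once per level — but this point deserves an explicit sentence, since the naive worry (as in error amplification for network coding) is precisely that errors, and hence redundancy, would snowball with depth. A secondary caveat, already flagged in the surrounding text, is that this scheme assumes the intermediate nodes are honest and merely the edges are noisy; handling adversarial nodes would require a different, end-to-end argument and is out of scope here.
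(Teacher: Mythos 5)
Your proposal is correct and takes essentially the same route as the paper: per-edge protection with a Reed--Solomon (MDS) code of rate $1-2\rho$, whose relative distance $\approx 2\rho$ supports bounded-distance correction of a $\rho$ fraction of errors, with decoding at each intermediate node before the IP linear combination is applied, so that the $(1-2\rho)^{-1}$ blow-up is incurred once per edge and does not compound with depth. The paper's own proof is a terser version of exactly this argument (it states the rate and relative distance of the RS code and notes that local decoding precludes error amplification); your added bookkeeping on the Singleton bound and the hop-by-hop induction is consistent with it.
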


\begin{proof}
Recall the transmission scheme for PM MSR codes described in Section \ref{sec:pm}. Any set $A$ of size $|A| < d-k+1$ transmits $|A|\beta$ symbols and any set $A$ of size $|A| \ge d-k+1$ transmits $l$ symbols according to \eqref{eq:eqtn4}. To minimize the overhead, we may
encode these symbols with a Reed-Solomon code of rate $1-2\rho,$ adding a fraction of $1/(1-2\rho)$ symbols to the transmission. 
The distance of the code relative to the length is about $2\rho,$ supporting the needed error correction function at the nodes.	
\end{proof}

This proposition offers a straightforward way of handling noisy edges in the network. An observation that can be made here is that
the addition of an error-correcting code ties well with the IP processing and does not require much in the way of extra computations. 
We exemplify this remark for PM codes and the IP procedure of Sec.~\ref{sec:pm}.

Let $G_l$ be the generating matrix of an error-correcting code of dimension $l$, say a Reed-Solomon code. 
Let $|A| \ge d-k+1$ and let $h \in A$ be the vertex that transmits the encoded vector $G_l^T\xi(f,A)$ to some other node outside of $A$, 
where the vector $\xi(f,A)$ representing the noiseless communication is given in \eqref{eq:eqtn4}. Writing this product in detail, we obtain
\begin{align*}
	G_l^T{\xi}(f,A)&= \sum_{h \in A}g^{(h)}(a_f)G_l^T
	\left[\begin{array}{l}
		l^h_0+a_f^{k-1}l^h_{k-1}\\[.05in]
		l^h_1+a_f^{k-1}l^h_{k}\\[.05in]
		\hspace*{.3in}\vdots\\
		l^h_{k-2}+a_f^{k-1}l^h_{2k-3}	
	\end{array}\right] \\
	& = g^{(h)}(a_f)G_l^T
	\left[\begin{array}{l}
		l^{h}_0+a_f^{k-1}l^{h}_{k-1}\\[.05in]
		l^{h}_1+a_f^{k-1}l^{h}_{k}\\[.05in]
		\hspace*{.3in}\vdots\\
		l^{h}_{k-2}+a_f^{k-1}l^{h}_{2k-3}	
	\end{array}\right] + G_l^T{\xi}(f,A\setminus\{h\}).
\end{align*}
where the first part is the encoded contribution of the node $h$ and the second part is the encoded contribution of the rest of the nodes in 
$A$. Each of these are themselves codewords of the RS code. Observe that the encoding can be distributed between the information of the
node $h$ and the data from the other nodes. Hence, for computation purposes, once the node $h$ receives a (possibly corrupted) codeword from some other member of the set $A$, it needs only to identify the nearest codeword of what was received. 
It does not necessarily need to recover $\xi(f,A\setminus\{h\})$, but can find the next codeword to transmit simply by adding its encoded 
contribution. 

%
%
	\section{Partial Node Repair}\label{sec:partial}
The problem of partial node repair is a generalization of the traditional node repair problem. While in the traditional setting, the set of 
erased or failed nodes is assumed to have complete loss of data, in the partial repair problem it is assumed that only a part, say $\gamma 
l, 0\le\gamma\le1$, of such a node's contents are erased. One of the first works devoted to this question was \cite{Gerami2017} which 
derived a version of the bound on the file size \eqref{eq:sbt} that accounts for the parameter $\gamma$:
    $$
    M\le \sum_{i=0}^{k-1}\min\{l,(d-i)\beta+l(1-\gamma)\}.
    $$
Note that this expression gives $l=\frac{M}{k}, \beta = \frac{l\gamma}{d-k+1}$ at the MSR point.

This problem gives rise to a number of open questions, starting with MSR code constructions, that to the best of our knowledge, have not been addressed in the literature. Without attempting a comprehensive analysis, we point out that partial repair can be implemented under the IP approach discussed here. The underlying idea is that the helper nodes need to transmit only the linear combinations corresponding to the failed coordinates.

	\begin{example}
		Consider the $[n=7,k=4,d=6,l=3,\beta=1,M=12]$ PM MSR code, placed on the graph in Fig.~\ref{fig:graph}.
		 If the entire contents of the root node is lost, both the AF strategy and the IP strategy of Lemma~\ref{lemma_pm} require a total transmission of 10 symbols. At the same time, if only the first coordinate of the root node needs to be recovered, then the two immediate neighbors of the root node can transmit just the first row of equation \eqref{eq:eqtn4}, and 6 transmissions suffice.
	\end{example}

Generalizing this example, we state the following lemma whose proof is immediate.
	\begin{lemma}
		Given an $[n,k,d,l,\beta,M]$ linear regenerating code defined on a complete graph, suppose that a fraction 
$\gamma \le \frac{\beta}{l}$ of the symbols of a node are erased. To recover their values, each helper node needs to transmit only $\gamma l$ symbols to the failed node. The total communication complexity of repair is $d\gamma l$. 
	\end{lemma}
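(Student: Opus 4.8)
The plan is to exploit the linearity of the code to show that the failed coordinates can be recovered independently of the intact coordinates. Since the code is $F$-linear, each coordinate of the stored vector at the failed node $v_f$ is a fixed $F$-linear functional of the data file $\cF$. Equivalently, in the standard fully connected repair scheme each helper node $i\in D$ transmits a vector $S_i^f\in F^\beta$ that is a linear function of $W_i$, and the recovery map at $v_f$ producing $W_f$ from $(S_i^f)_{i\in D}$ is $F$-linear. First I would write $W_f=L\big((S_i^f)_{i\in D}\big)$ for a fixed linear map $L\colon F^{d\beta}\to F^l$, and decompose $L$ coordinate-wise: the $j$th coordinate $(W_f)_j$ equals $L_j\big((S_i^f)_{i\in D}\big)$ for the $j$th row $L_j$ of the matrix representing $L$.

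The key observation is that to recover only the erased coordinates — say the index set $J\subset[l]$ with $|J|=\gamma l$ — the failed node only needs the values $\{L_j((S_i^f)_{i\in D}) : j\in J\}$, i.e., the action of the submatrix $L_J$ formed of the rows indexed by $J$. Now the crucial step: each helper node $i$ can locally compute $S_i^f\in F^\beta$ from $W_i$, but rather than forwarding the whole $\beta$-vector, I claim node $i$ only needs to forward the contribution of $S_i^f$ to the $\gamma l$ recovered symbols. Concretely, writing $L_J\big((S_i^f)_{i\in D}\big)=\sum_{i\in D}M_i^{(J)}S_i^f$ where $M_i^{(J)}$ is the $\gamma l\times\beta$ block of $L_J$ acting on the $i$th coordinate, the quantity contributed by node $i$ is the vector $M_i^{(J)}S_i^f\in F^{\gamma l}$. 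Under the hypothesis $\gamma\le\beta/l$ we have $\gamma l\le\beta$, so this vector has at most $\beta$, hence at most $\gamma l$, components; thus node $i$ can compute and transmit exactly $\gamma l$ symbols, and the failed node recovers the erased coordinates by summing these contributions over $i\in D$. This yields total communication $d\gamma l$.

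The only subtlety worth checking is that $M_i^{(J)}S_i^f$ is indeed computable at node $i$ from its stored contents $W_i$ alone: this holds because $S_i^f$ is a function of $W_i$ (part of the definition of a repair scheme, cf.\ $H(S_i^f|W_i)=0$ and linearity) and the matrices $L_J$, $M_i^{(J)}$ depend only on the code, the identity of $v_f$, and the helper set $D$, all of which are known network-wide. There is no real obstacle here; the statement is essentially a bookkeeping consequence of linearity, which is precisely why the paper calls its proof immediate. If one wanted to be slightly more careful, the one point to verify is that recovering a subset of coordinates of a linear recovery map is itself linear in the transmitted data and does not require the non-erased coordinates as side information — but this is automatic because $L_J$ does not involve the intact coordinates of $W_f$ at all.
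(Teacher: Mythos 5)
Your proof is correct and is exactly the argument the paper has in mind: the paper declares the proof immediate and states the idea as ``the helper nodes need to transmit only the linear combinations corresponding to the failed coordinates,'' which is your block decomposition $L_J\big((S_i^f)_{i\in D}\big)=\sum_{i\in D}M_i^{(J)}S_i^f$ with each helper sending $M_i^{(J)}S_i^f$. One small slip: the vector $M_i^{(J)}S_i^f$ has exactly $\gamma l$ components by construction, so the hypothesis $\gamma l\le\beta$ is not what makes it transmittable (your ``at most $\beta$, hence at most $\gamma l$'' is backwards); its only role is to guarantee that sending $\gamma l$ symbols is no worse than sending $S_i^f$ itself, i.e., that the scheme actually saves bandwidth.
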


Using the notation of Sec.~\ref{sec:BRB},
let $T_f$ be a spanning tree of the repair graph $G_{f,D}$ with root at $f$. 
Let $D(v)$ be the descendants of node $v \in V_{f,D}$ and let $D^*(v) = D(v) \cup \{v\}$. We have the following lemma that generalizes our earlier result (Theorem 1 in \cite{Patra2021}).
	\begin{lemma}
Given an $[n,k,d,l,\beta,F]$ linear regenerating code. There exists a repair procedure that recovers a $\gamma$ 
fraction of the failed node, $0<\gamma<1,$ using the repair bandwidth
		$$
		\sum_{v \in D} \min\{\gamma l,|D^*(v)|\beta\}.
		$$
	\end{lemma}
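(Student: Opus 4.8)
The plan is to combine the local IP savings established for PM codes (Section~\ref{sec:pm}) and for the other linear families with a bottom-up pipeline along the repair tree $T_f$, generalizing the argument of \cite[Thm.~1]{Patra2021} by replacing the full-repair contribution $l$ with the partial-repair contribution $\gamma l$. First I would fix the spanning tree $T_f$ of $G_{f,D}$ rooted at $f$, so that every helper node $v\in D$ lies on a unique path to $f$, and I would process the tree from the leaves toward the root. The key structural fact is that the data that must ultimately reach $f$ in order to restore the lost $\gamma l$ coordinates is itself a linear image (of dimension at most $\gamma l$) of the full repair vector: in the PM description this is the sub-collection of rows of \eqref{eq:eqtn4} indexed by the erased coordinates, and analogous restrictions exist for the generalized PM, Moulin, determinant and cascade constructions by linearity of the encoding map $\phi$. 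Consequently, every subtree $D^*(v)$ rooted at $v$ need only forward, across the edge from $v$ to its parent, the partial sum of its members' contributions to those $\gamma l$ coordinates — but this partial sum lives in a space of dimension at most $\gamma l$, while the naive AF forwarding would send $|D^*(v)|\beta$ symbols, one $\beta$-block per node in the subtree.

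The second step is the edge-by-edge accounting. For each edge $(v,\text{parent}(v))$ in $T_f$, the node $v$ receives from its children the (already partially combined) contributions of all nodes in $D^*(v)\setminus\{v\}$, adds its own contribution, and sends the result to its parent. If $|D^*(v)|\beta \le \gamma l$, it is cheaper to forward the $|D^*(v)|\beta$ raw symbols (no information is lost and nothing is gained by compressing below the dimension actually present); if $|D^*(v)|\beta > \gamma l$, the node instead transmits the length-$\gamma l$ combined vector, which is the projection of the accumulated data onto the erased-coordinate subspace. Either way the cost on that edge is $\min\{\gamma l,|D^*(v)|\beta\}$, and summing over $v\in D$ gives the claimed bandwidth $\sum_{v\in D}\min\{\gamma l,|D^*(v)|\beta\}$. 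Correctness — that the value received at $f$ indeed determines the erased coordinates — follows because the map from the full set $\{S_h^f\}_{h\in D}$ to the erased coordinates factors through the composition of the per-edge linear combinations (by associativity of addition and linearity, partial sums can be taken in any order dictated by the tree), exactly as in the full-repair IP arguments of Sections~\ref{sec:pm}--\ref{sec:cascade}.

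The main obstacle is making precise, uniformly across the code families, that the "$\gamma l$-dimensional projection of the accumulated data" is well defined at every intermediate node and is all that is needed downstream — i.e., that restricting the IP repair procedure to a subset of output coordinates still only requires the node to hold a $\gamma l$-dimensional vector rather than something larger. For PM codes this is transparent from \eqref{eq:eqtn4} (just delete rows), and the hypothesis $\gamma \le \beta/l$ is not even needed here since we take the minimum with $|D^*(v)|\beta$; for the tensor-based codes one has to observe that the basis of $T^pV\otimes u_f\otimes\Lambda^qW$ used in Lemma~\ref{lemma:lin} can be chosen to be compatible with the erased coordinates, so that the restricted repair reduces to computing a sub-collection of the same linear functionals. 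Once that compatibility is noted, the tree induction is routine and mirrors \cite[Thm.~1]{Patra2021} verbatim with $l$ replaced by $\gamma l$.
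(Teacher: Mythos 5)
Your proposal is correct and follows essentially the same route as the paper: a bottom-up pass over the spanning tree $T_f$ in which each node $v$ forwards either the raw $|D^*(v)|\beta$ symbols or the $\gamma l$ linear combinations corresponding to the erased coordinates, whichever is smaller, with the total obtained by summing over tree edges. The paper's proof is just a terser version of this same argument (a direct adaptation of Theorem 1 of the earlier work with $l$ replaced by $\gamma l$), so no further commentary is needed.
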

\begin{proof}
Any leaf helper node transmits $\min\{\gamma l,\beta\}$ symbols to its parent in $T_f$. This is possible because if $\gamma < \frac{\beta}{l}$, then instead of the $\beta$ symbols the linear combinations of these symbols corresponding to the failed coordinates can be transmitted. Similarly any non-leaf node can transmit $\min\{|D^*(v)|\beta,\gamma l\}$. The total communication complexity is the sum of each such transmission because of the tree structure.
\end{proof}
This result entails savings in communication over the AF repair when the erased fraction of the node contents $\gamma$ is small, namely $\gamma l < \beta.$

\section{Concluding remarks}
Many facets of designing regenerating codes on networks described by general graphs still await their study. We mention two such research directions. 

While we have discussed incorporating error correction into the regeneration framework in networks with noisy edges, perhaps a more
relevant problem is that of adversarial nodes in the network. Such a node can alter the information stored in it, and provide arbitrary data for the repair and data retrieval tasks. Under full connectivity the effect of adversarial nodes on the outcome of these tasks can be
controlled by contacting larger groups of nodes: for instance if the code is designed to perform repair using the data from $d$ helpers,
and at most $t$ nodes in the network can be adversarial, then contacting $d+2t$ helpers and using specially designed regenerating codes
still supports reliable storage and repair \cite{RSRK2012,Ye16a}. At the same time, for regenerating codes on graphs, performing IP repair or
data retrieval in the presence of adversarial nodes looks difficult because even one such node on the path in the graph from the helpers to the 
failed node can corrupt the aggregated information entering it from a large group of nodes. A solution utilized in network coding for graphs 
with noisy links relies on rank metric codes \cite{KK2008}, however in the node repair paradigm it is necessary to combine them with 
regenerating codes to control the communication complexity as well as the spread of the errors, which does not look immediate. The difficulty
of handling adversarial nodes has been recognized in the network coding community, where the only paper that addressed it, \cite{Che2013}, limited its scope to optimized routing, stopping short of designing code constructions.

Another way of accounting for the distance to the helper nodes potentially looks outside the domain of regenerating codes for bandwidth saving. Namely, it could be possible to design a coding system that, once faced with a repair task, identifies a helper set of graph vertices and
proceeds with downloading the amount of information inverse proportional to the distance to the failed node. Such a code should be able 
to handle any choice of the failed node and the helper set in a uniform fashion. While most known constructions of regenerating codes
rely on uniform download (which is also necessary and sufficient for attaining the cutset bound \eqref{eq:MSR-MBR} for the MSR case), there
are some works on heterogeneous storage systems based on different volumes of the data passed from different subsets of helper nodes, see e.g.,
\cite{AkhKiaGha2010cost}. At the same time, none of these schemes allow for a flexible choice of those subsets, which therefore remains an open
problem.

\appendix
\subsection{Proof of Lemma \ref{lemma:cA1} {\rm (see \cite{Elyasi2020}; \cite[p.631-2]{Ramkumar2022})}}
	Let $A_{\sim i}= A\setminus\{i\}, A_y = A\cup \{y\}, A_{\sim i,y} = (A_y)_{\sim i}$. Below we denote the elements of the matrices $R,D$ and $\Phi$ by lowercase letters. Recall also that $w_{S,i}$ denote elements of the set $\cW$.
	\begin{align*}
\sum_{i \in A}(-1)^{\tau_A(i)}R_{A\setminus\{i\},:}D_{:,i}
&=\sum_{i \in A}(-1)^{\tau_A(i)}\sum_{L\subset [d], |L|=m}r_{A\sim i, L}d_{L,i}\\
&= \sum_{i \in A}(-1)^{\tau_A(i)}r_{A\sim i, A}d_{A,i} + \sum_{i \in A}(-1)^{\tau_A(i)}\sum_{y\in [d]\setminus A}r_{A_{\sim i},A_{\sim i,y}}d_{A_{\sim i,y},i}\\
&= \sum_{i\in A}\phi_{i,1}d_{A,i}+\sum_{i \in A}(-1)^{\tau_A(i)}\sum_{y\in [d]\setminus A}(-1)^{\tau_{A_{\sim i,y}}(y)}\phi_{y,1}d_{A_{\sim i,y},i}\\
& = \sum_{i\in A}\phi_{i,1}d_{A,i}+\sum_{y\in [d]\setminus A}\phi_{y,1}\sum_{i \in A}(-1)^{\tau_A(i)+\tau_{A_{\sim i,y}}(y)}w_{A_y,i}.
	\end{align*} 
Now for $i \ne y$,
\begin{align*}
\tau_A(i)&+\tau_{A_{\sim i,y}}(y) = |\{j \in A: j \le i\}|+|\{x \in A_{\sim i,y}:x \le y\}|\\
&= |\{j \in A_y: j \le i\}-\mathbbm{1}(y <i)+|\{x \in A_y:x \le y\}|- \mathbbm{1}(i <y)\\
&= |\{j \in A_y: j \le i\}+|\{x \in A_y:x \le y\}|-1\\
&= \tau_{A_y}(i)+\tau_{A_y}(y)-1,
\end{align*}
and we obtain
\begin{align*}
\sum_{i \in A}(-1)^{\tau_A(i)}R_{A\setminus\{i\},:}D_{:,i}  &= \sum_{i\in A}\phi_{i,1}d_{A,i}+\sum_{y\in [d]\setminus A}\phi_{y,1}\sum_{i \in A}(-1)^{\tau_{A_y}(i)+\tau_{A_y}(y)-1}w_{A_y,i}\\ 
&= \sum_{i\in A}\phi_{i,1}d_{A,i}+\sum_{y\in [d]\setminus A}(-1)^{\tau_{A_y}(y)}\phi_{y,1}\sum_{i \in A}[-(-1)^{\tau_{A_y}(i)}w_{A_y,i}]\\
&=\sum_{i\in A}\phi_{i,1}d_{A,i}+\sum_{y\in [d]\setminus A}(-1)^{\tau_{A_y}(y)}\phi_{y,1}(-1)^{\tau_{A_y}(y)}w_{A_y,y}\\
&= \sum_{i\in A}\phi_{i,1}d_{A,i} +\sum_{y \in [d]\setminus A}\phi_{y,1}d_{A,y} = \sum_{i\in [d]}d_{A,i}\phi_{i,1} = c_{A,1}.
\end{align*}

\bibliographystyle{IEEEtranS}
\bibliography{references}	
\end{document}